\title{Simulating Chirality: Solving Distance-$k$-Dispersion on an 1-Interval Connected Ring} 
 \author{Brati Mondal}{Department of Mathematics, Jadavpur University, Kolkata-700032, India}{bratim.math.rs@jadavpuruniversity.in}{https://orcid.org/0009-0001-3017-9924}
 {(Optional) author-specific funding acknowledgements}
 \author{Pritam Goswami}{Department of Computer Science, Sister Nivedita University, Kolkata-700156, India}{pgoswami.cs@gmail.com}{https://orcid.org/0000-0002-0546-3894}
 {[funding]}
  \author{Buddhadeb Sau}{Department of Mathematics, Jadavpur University, Kolkata-700032, India}{buddhadeb.sau@jadavpuruniversity.in}{https://orcid.org/0000-0001-7008-6135}
  {[funding]}
\keywords{mobile agents, dispersion, D-$k$-D, dynamic ring, chirality} 
\newcommand{\R}{\mathcal{R}}
\newcommand{\C}{\mathcal{C}}
\newcommand{\D}{\mathcal{D}}
\newcommand{\In}{\texttt{Init}}
\newcommand{\MergeII}{\texttt{Merge-II}}
\newcommand{\Osc}{\texttt{Oscillate}}
\newcommand{\PreMergeI}{\texttt{PreMerge-I}}
\newcommand{\round}{\texttt{Roundabout}}
\newcommand{\preRound}{\texttt{PreRoundabout}}
\newcommand{\MergeI}{\texttt{Merge-I}}
\newcommand{\mix}{\texttt{0-Merge-1}}
\newcolumntype{P}[1]{>{\raggedright\arraybackslash}p{#1}}
\begin{document}
\maketitle

\begin{abstract}
We study the \textit{Distance-$k$-Dispersion} (D-$k$-D) problem for synchronous mobile agents in a 1-interval connected ring network having $n$ nodes and with $l$ agents, where $3\le l \le \lfloor\frac{n}{k}\rfloor$, without the assumption of \textit{chirality} (a common sense of direction for the agents). This generalizes the classical dispersion problem by requiring that agents maintain a minimum distance of $k$ hops from each other, with the special case $k = 1$ corresponding to standard dispersion.

The contribution in this work is threefold.
Our first contribution is a novel method that enables agents to \textit{simulate chirality} using only local information and bounded memory. This technique demonstrates that chirality is not a fundamental requirement for coordination in this model.

Building on this, our second contribution partially resolves an open question posed by Agarwalla et al.\ (ICDCN 2018), who considered the same model (1-interval connected rings, synchronous agents, no chirality). We prove that D-$k$-D—and thus dispersion is solvable from \textit{arbitrary initial configurations} under these assumptions (excluding vertex permutation dynamism) for any size of the ring network which was earlier limited to only odd-sized ring or a ring of size four.

Finally, we present an algorithm for D-$k$-D in this setting that works in $O(ln)$ rounds, completing the constructive side of our result.

Altogether, our findings significantly extend the theoretical understanding of mobile agent coordination in dynamic networks and clarify the role of chirality in distributed computation.

\end{abstract}

\section{Introduction}

In recent years, mobile agents in distributed networks have emerged as a significant research area for solving fundamental coordination problems. These agents are generally simple, autonomous, and low-cost mobile units that can collaboratively solve tasks such as \textit{gathering}, \textit{exploration}, \textit{flocking}, and \textit{dispersion}.

This paper focuses on the \textit{dispersion} problem, first introduced in \cite{AM2018ICDCN}. In this problem, $l$ mobile agents are deployed on a graph with $n$ nodes, and the goal of the problem is to make the agents reposition themselves autonomously so that each node contains at most $\lceil \frac{l}{n} \rceil$ agents. Note that, when $l \geq n$, dispersion ensures that each node holds at most one agent. Dispersion is closely related to other well-known problems such as \textit{scattering}~\cite{BFES2011SCATTERINGGRID}, \textit{exploration}~\cite{LDFS2016ICDCS}, \textit{self-deployment}~\cite{EB20117DEPLOYMENT, SSNY2022ALMOSTUNIFORMDEPLOYMENT}, and \textit{load balancing}~\cite{LOADBALANCING12FOCS}, and has practical applications in many real world scenarios such as relocating self-driving vehicles to charging stations, deploying robots in hazardous environments, and more.

We study a generalization of this problem known as \textit{Distance-$k$-Dispersion} (D-$k$-D), introduced in~\cite{MGS2025DKDICDCN25}. The goal of this problem is to reposition agents such that the distance between any two agents on the graph is at least $k$. By definition, this problem inherits all practical applications of dispersion, while also broadening its scope. For instance, if each agent has a sensing radius of $k$ hops, solving D-$k$-D helps maximize sensing coverage across the network by minimizing the overlapping covering area.

In this work, we focus on solving the D-$k$-D problem on a dynamic ring network, specifically under the 1-interval connectivity model (i.e., at any particular round at most one edge of the ring might stay missing), with $l \geq 3$ agents starting from any arbitrary initial deployment, and without assuming chirality (i.e., no common agreement on clockwise or counterclockwise direction). It is assumed that the agents are capable of local communication (i.e., agents can communicate only when they are on the same node) and have bounded memory.

\subsection{Literature, Background, and Motivation}

Over the last decade, the dispersion problem has been studied extensively across a range of models and network topologies~\cite{AAMKS2018ICDCN, AM2018ICDCN, CKMS2023CALDAM, DBS2021CALDAM, GKM2024CALDAM, GMMP2024JPDC, KA2019ICDCN, KM2023NETYS, KMS2019ALGO, KMS2020ICDCN, KMS2020ICDCS, KMS2020WALCOM, KMS2022JPDC, KS2021OPODIS, MM2019TAMC, MMM2020ALGOSENSORS, MMM2021IPDPS, MMM2021TCS, SSKM2020SSS, SSNYT2024DISC}, owing to its practical relevance. Many real-world scenarios involving $l$ agents and $n$ spatially distributed resources can be modeled as a dispersion problem to minimize coordination time and system cost. For example, smart electric vehicles coordinating to reach available charging stations can be modeled as a dispersion task—driving to a station may take minutes, but recharging takes hours, so efficient coordination is crucial.

The Distance-$k$ Dispersion problem generalizes dispersion and has its own set of real-world applications. For instance, in environments where each agent can sense or affect a region within $k$ hops, D-$k$-D helps ensure maximum effective coverage by maintaining sufficient separation between agents.

In the context of ring networks, 
Agarwalla et al.~\cite{AAMKS2018ICDCN} studied dispersion in dynamic rings under two key types of network dynamism:
\begin{itemize}
    \item \textit{1-interval connectivity}, where in any round, at most one edge (chosen arbitrarily by the adversary) may be missing.
    \item \textit{Vertex permutation}, where the adversary can reorder nodes arbitrarily in each round.
\end{itemize}

Assuming local communication (i.e., agents can communicate only when co-located), the authors proposed several algorithms under different assumptions:
\begin{enumerate}
    \item With chirality and full visibility, they presented \textit{VP-1-Interval Chain} to solve dispersion.
    \item In achiral settings where agents are co-located with full visibility, first they used \textit{No-Chiral-Preprocess} to establish chirality, followed by \textit{VP-1-Interval Chain} to achieve dispersion. This approach requires agents to store the smallest ID, thus requiring memory.
    \item When agents are scattered, they proposed \textit{Achiral-Odd-VP-1-Interval-Chain} and \textit{Achiral-Even4-VP-1-Interval-Chain} to solve dispersion, but only for rings with an odd number of nodes or exactly four nodes, respectively. The latter also requires memory.
\end{enumerate}

These results naturally raise the following question:

\begin{center}
\textbf{Is chirality necessary to solve dispersion on a 1-interval connected ring with $n = 2n'$ nodes for $n' (> 2) \in \mathbb{N} $, under the same model with memory as in~\cite{AAMKS2018ICDCN}?}
\end{center}

Our primary goal in this paper is to address this question by exploring the achiral setting in greater depth under the same assumptions as in~\cite{AAMKS2018ICDCN} (excluding vertex permutation), while allowing agents to have memory. This investigation reveals important insights into the relative power of the model with and without chirality. Building on these findings, we further study the D-$k$-D problem in the same model, partially resolving the open question posed in~\cite{AAMKS2018ICDCN} about solving dispersion from arbitrary configurations on even-sized 1-interval connected rings (excluding the vertex permutation dynamism). A summary of our results is presented in the following subsection.

 \subsection{Our Contribution}
In this work, we study the problem of D-$k$-D on a 1-interval connected ring under the same bounded memory model described in \cite{AM2018ICDCN}. Here, we consider the 1-interval connectivity dynamism only, while excluding the vertex permutation. This is because, in vertex permutation dynamism, D-$k$-D  is not at all solvable if $k\ge 2$, as the adversary can always swap vertices to make the distance between two agents less than $k$. 
While our original objective was to solve the D-$k$-D problem from any arbitrary configuration under the bounded-memory model described in~\cite{AAMKS2018ICDCN}, excluding vertex permutation dynamism (hereafter referred to as $\mathcal{M}$), our investigation led to a deeper and unexpected insight:

\noindent\textbf{The model $\mathcal{M}$ without chirality is computationally equivalent to $\mathcal{M}$ with chirality}. 

To formally establish this equivalence, we introduce a novel algorithm, \textsc{Achiral-2-Chiral}, which enables a team of $l \ge 3$ agents, starting without any common sense of direction, to collectively establish chirality. The algorithm completes its execution within $O(ln)$ synchronous rounds, where $l$ is the number of agents and $n$ is the number of nodes of the ring. This result is significant, as it shows that agents can simulate the benefits of chirality using only local communication, full visibility, and bounded memory ($O(\log n)$ specifically), even in dynamic settings.

Building on this, we then address the original D-$k$-D problem. Once chirality is established, agents execute our second algorithm, \textsc{Dispersed} to achieve a dispersed configuration and then the third algorithm, \textsc{Dispersed-To-$k-$Dispersed}, which solves the D-$k$-D problem in dynamic rings and guarantees termination within $O(ln)$ rounds, where $n$ is the size of the ring.

Together, these results not only settle the open question posed in~\cite{AAMKS2018ICDCN} about solving dispersion (equivalent to solving D-$k$-D where $k=1$) in even-sized dynamic rings without chirality, under bounded memory and full visibility partially (only for 1-interval connectivity dynamism), but also introduce a powerful framework for transforming achiral systems into chiral ones within the same model, without any additional assumptions.
 The following Table~\ref{tab:1} provides a comparative overview of this work with previous literature. 

\begin{table}[h!]

\centering
\footnotesize
\renewcommand{\arraystretch}{1.3}
\begin{tabular}{|P{3cm}|P{4.5cm}|P{4.5cm}|}
\hline
\textbf{Aspect} & \textbf{Agawalla et al. \cite{AAMKS2018ICDCN} (ICDCN, 2018)} & \textbf{This work} \\
\hline
\textit{Problem} & \textbf{Dispersion ($k=1$)} & Distance-$k$-Dispersion for any $k$ such that $\lfloor\frac{n}{l}\rfloor\ge k\ge 1$ \\
\hline
\textit{Initial Chirality Agreement} & No & No \\
\hline
\textit{Achieves Chirality Agreement in between} & No & Yes \\
\hline
\textit{Dynamism of the ring} & 1-interval connectivity + vertex permutation & Only 1-interval connectivity; vertex permutation makes the problem impossible if $k \ge 2$ \\
\hline
\textit{Initial Configuration} & Any arbitrary configuration & Any arbitrary configuration \\
\hline
\textit{Size of ring} ($n$) & $n$ is either odd or four & $n$ can be any natural number \\
\hline
\textit{Time Complexity} & $O(n)$ & $O(ln)$, where $l$ is the number of agents \\
\hline
\end{tabular}
\caption{Comparison with Agawalla et al. (ICDCN 2018)}
\label{tab:1}
\end{table}

\section{Preliminaries}
In this section, we discuss the preliminaries, including the model assumed for agents, the scheduler, and other relevant components, along with some necessary definitions, that will be used throughout the paper. In the following subsection, we first describe the agent model and the ring structure. In the subsequent sections, several necessary terms and the problem under consideration are formally defined.
\subsection{Model and Problem Definition}
Let $\R=(V_{\R}, E_{\R})$ be a connected network where $|v_{\R}|= n$ and $\forall\  v \in V_{\R}, deg(v)=2$. We call $\R$ a ring network with $n$ vertices. Let $A=\{ r_1,r_2,\cdots r_l\}$ be a set of $l$ agents arbitrarily deployed on the nodes of $\R$, where $l \ge 3$. Each of the agents has the same computational capabilities with a bounded persistent memory of $O(\log n)$ bits. Each of the agents has a unique ID in the range of $[1,n^c]$ for some constant $c$. We denote the ID of an agent $r_i$ as $ID(r_i)$ and the $x-$th bit of the ID of $r_i$ from right as $ID_x(r_i)$. The number of bits used to store ID for each agent is $B$. Thus $B \approx O(\log n)$. The agents are autonomous (i.e., no central control), identical (i.e., physically indistinguishable), and homogeneous (i.e., execute the same algorithm). 
One or more agents can occupy the same node at a particular time. 
If at a particular routine, more than one agent is at a node then we call that node a \textit{multiplicity} at that time. Otherwise, if a node has only one agent at any particular round, that node is called \textit{singleton} at that time. The agents do not have any chirality agreement but their clockwise direction is consistent throughout the execution.

The agents operate in a synchronous setting, where time is divided into rounds of equal duration. Each round consists of the following four stages:
\begin{enumerate}
    \item \textbf{Edge Removal (by the adversary):} At the beginning of each round, the adversary may arbitrarily remove one edge from the ring $\R$, or leave the ring unchanged.
    \item \textbf{Look:} Each agent, with \textit{full visibility}, captures a snapshot of the entire ring, identifying singleton nodes, multiplicity nodes, unoccupied nodes, and the location of the missing edge, if any. It also accesses its memory and that of any co-located agents (Known as \textit{Local Communication}). This enables the agent to determine the exact number of agents at its current node—an ability known as \textit{local strong multiplicity detection}. For other nodes, however, the agent can only distinguish between singleton and multiplicity without knowing the exact agent count, referred to as \textit{global weak multiplicity detection}.
    \item \textbf{Compute:} Using the information collected during the Look phase, each agent runs the given algorithm to determine its next action.
    \item \textbf{Move:} Based on the output of the Compute phase, each agent updates its memory and moves to an adjacent node, as specified by the algorithm. The move is assumed to be \textit{instantaneous}, i.e., at any particular round, during the Look stage all agents are seen to be on nodes of the ring.   
\end{enumerate}
\textbf{Dynamism of the ring $\R$:} The ring $\R$ is an 1-interval connected ring. In a 1-interval connected ring at most one edge of $\R$ might stay missing in any particular round. 
We now proceed to define the problem formally in the subsection.
\begin{definition}[\textbf{Problem Definition}: Distance-$k$-Dispersion]\label{def: Problem Definition}
    Let $l$ agents ($3\le l \le \lfloor \frac{n}{k} \rfloor$) reside on an 1-interval connected ring $\mathcal{R}$ with $n$ nodes. Initially, the agents are placed arbitrarily on the nodes of the ring. The problem \textit{Distance-$k$-Dispersion} (D-$k$-D) asks to reposition the agents to achieve a configuration that satisfies the following: \\
(i) The occupied node contains exactly one agent.\\
(ii) The distance between any two occupied nodes is at least $k$.\\
(iii) There exists at least one pair of nodes whose distance is exactly $k$.
\end{definition}
We need to define some necessary terms as preliminaries. The definitions are in the following subsection. Some diagrams (Figure~\ref{fig:chain} and Figure~\ref{fig: klink}) are given for better visualization of the definitions.

\subsection{Definitions and Preliminaries}
\begin{definition}[Configuration at time $t$]\label{def:config}
    Configuration at time $t$ or, $\C(t)$ is the 3-tuple $(\R, e_t, f_t)$ where $\R$ is the ring network on which agents are deployed, $e_t\in \{ \varnothing \}\cup E_{\R}$ such that at $t$, edge $e_t$ is missing in $\R$, and $f_t: V_{\R}\rightarrow \mathbb{N}$ be a function such that $f_t(v)$ denotes the number of agents on $v$ at time $t$.
\end{definition}
In a configuration $\C(t)$, we call a vertex $v\in V_{\R}$ occupied if $f_t(v)\ne0$. An occupied vertex $v$ in a configuration $\C(t)$, is called \textit{singleton} if $f_t(v)=1$. $v$ is called \textit{multiplicity} if $f_t(v)>1$.
Also, in a configuration $\C(t)=(\R, e_t, f_t)$, the vertex of agent $r$ is denoted as $v_t(r)$. If the time $t$ is clear from context, we use the symbol $v(r)$. If $f_t(v)\le 1$ for all $v\in V_{\R}$ we call the configuration $\C(t)$ \textit{dispersed}. 

Henceforth, $CW$ and $CCW$ denote the clockwise and counterclockwise direction respectively. 
Let $u$ and $v$ be two vertices of a ring $\R$ and $\D\in \{CW, CCW\}$ be a direction. An \textit{arc from $u$ to $v$ in direction $\D$, denoted by $(u,v)_{\D}$}, is the subgraph induced by the set of all vertices in between $u$ and $v$ starting from $u$ in the direction $\D$ including $u$ and $v$. 
 We define the \textit{distance from $u$ to $v$ in direction $\D$} as the length of the arc $(u,v)_{\D}$ and denote it as $d_{\D}(u,v)$.
\begin{definition}[Adjacent occupied node pair]\label{def: adjacent occupied node pair}
Let $\C(t)=(\R,e_t,f_t)$ be a configuration at some time $t$. Two nodes, $u,v$ in $\R$ are said to form adjacent occupied node pairs in $\C(t)$ if there exists a direction $\D \in \{CW, CCW\}$ such that both of the following conditions hold in $\C(t)$.
\begin{enumerate}
    \item $f_t(u),f_t(v) \ne 0$
    \item $\forall$ $w\in (u,v)_{\D}$ where $w$ is not $u$ or $v$, $f_t(w)=0$ 
\end{enumerate}
    
\end{definition}

\begin{definition}[Chain]\label{def:chain}
    Let $\C(t)=(\R,e_t,f_t)$ be a configuration at time $t$. Let $(u,v)_{\D}$ be an arc for some $u,v \in V_{\R}$ and $\D \in \{CW,CCW\}$. The arc $(u,v)_{\D}$ is called a chain if all the following conditions hold in $\C(t)$.
    \begin{enumerate}
        \item $f_t(x) \ne 0 ~\forall x \in (u,v)_{\D}$
        \item If $x_u$ and $x_v$ be the adjacent nodes of $u$ and $v$ respectively such that $x_u, x_v \notin (u,v)_{\D}$ (if exists), then $f_t(x_u)=f_t(x_v)=0$.
    \end{enumerate}
    We denote this chain as  $C_{\D}(u,v)$
\end{definition}
We call $(u,v)_{\D}$ \textit{the arc of the chain}  $C_{\D}(u,v)$. A chain $C_{\D}(u,v)$ is called an $i-$\textit{chain} if $d_{\D}(u,v)=i$. Two chains $C_{\D}(u, v)$ and $C_{\D}(x,y)$ are called \textit{adjacent} if at least one of the two arcs, $(v,x)_{\D}$ or, $(y,u)_{\D}$ has no occupied node. 

In a configuration $\C(t)$, let $r$ be an agent in a 0-chain for some $t$. If $f_t(v(r))=1$, then we call the chain a \textit{singleton 0-chain}. Similarly if agents $r_1$ and $r_2$ are in same 1-chain in $\C(t)$, such that $v(r_1)\ne v(r_2)$ and $f_t(v(r_1))=f_t(v(r_2))=1$ then, we call such 1-chains \textit{Singleton 1-chain}. Let in an configuration $\C(t)=(\R,e_t,f_t)$, $C_{\D}(u,v)$ be a  chain for some $u, v \in V_{\R}$ and $\D\in \{CW,CCW\}$. Let $u'$ and $v'$ be nodes adjacent to $u$ and $v$ respectively such that $u',v'\notin (u,v)_{\D}$. We call the arc $(u',v')_{\D}$ the \textit{extended arc of chain $C_{\D}(u,v)$}.
\begin{definition}[Symmetric Configuration]
    Let $\C(t)=(\R,e_t,f_t)$ be a configuration at time $t$. If $e_t\ne \varnothing$ then we call $\C(t)$ symmetric if the following condition is true in $\C(t)$
    \begin{itemize}
        \item $\exists$ an $i-$chain $C_{\D}(u,v)$ such that $i$ is odd and $e_t=\overline{u'v'}$, $u',v' \in (u,v)_{\D}$ and $d_{\D}(u,u')=d_{\D}(v',v)$ where $d_{\D}(u,u') < d_{\D}(u,v')$
    \end{itemize}
\end{definition}
\begin{definition}[Asymmetric Configuration]
    Let $\C(t)=(\R,e_t,f_t)$ be a configuration at time $t$. If $e_t\ne \varnothing$ then we call $\C(t)$ asymmetric if the following condition is true in $\C(t)$
    \begin{itemize}
        \item $\exists$ an $i-$chain $C_{\D}(u,v)$ whose extended arc be $(u_1,v_1)_{\D}$. Let for $e_t=\overline{u'v'}$, $u',v' \in (u_1,v_1)_{\D}$ and $d_{\D}(u_1,u')\ne d_{\D}(v',v_1)$ where $d_{\D}(u,u') < d_{\D}(u,v')$
         
    \end{itemize}
\end{definition}

\begin{definition}[visible direction of a  chain]\label{def:visible direction}
    Let $C_{\D}(u,v)$ be an $i-$chain in a configuration $\C(t)=(\R,e_t,f_t)$ where $i\ge 1$. We say that the chain $C_{\D}(u,v)$ has a visible direction if for exactly one of $u$ and $v$, say $w$, $f_t(w)=1$ and for the vertex, say $w'$ other than $w$ among $u$ and $v$, $f_t(w')>1$.
\end{definition}
If for a chain $C_{\D}(u,v)$ with visible direction in configuration $\C(t)=(\R,e_t,f_t)$, $f_t(u)=1$ then we say the direction of the chain is $\D$ otherwise if $f_t(u)>1$ then the direction is $\D'$, where $\D'$ is the reverse direction of $\D$. The class of all $i-$chains where $i\ge 2$  and all visibly directed 1-chains is called the \textit{feasible class} and is denoted as $\mathcal{FC}$. On the other hand, the class of 0-chains and 1-chains without visible direction is called the \textit{non-feasible class} or $\mathcal{NFC}$.

\begin{figure}[H]
    \centering
    \includegraphics[scale=0.55]{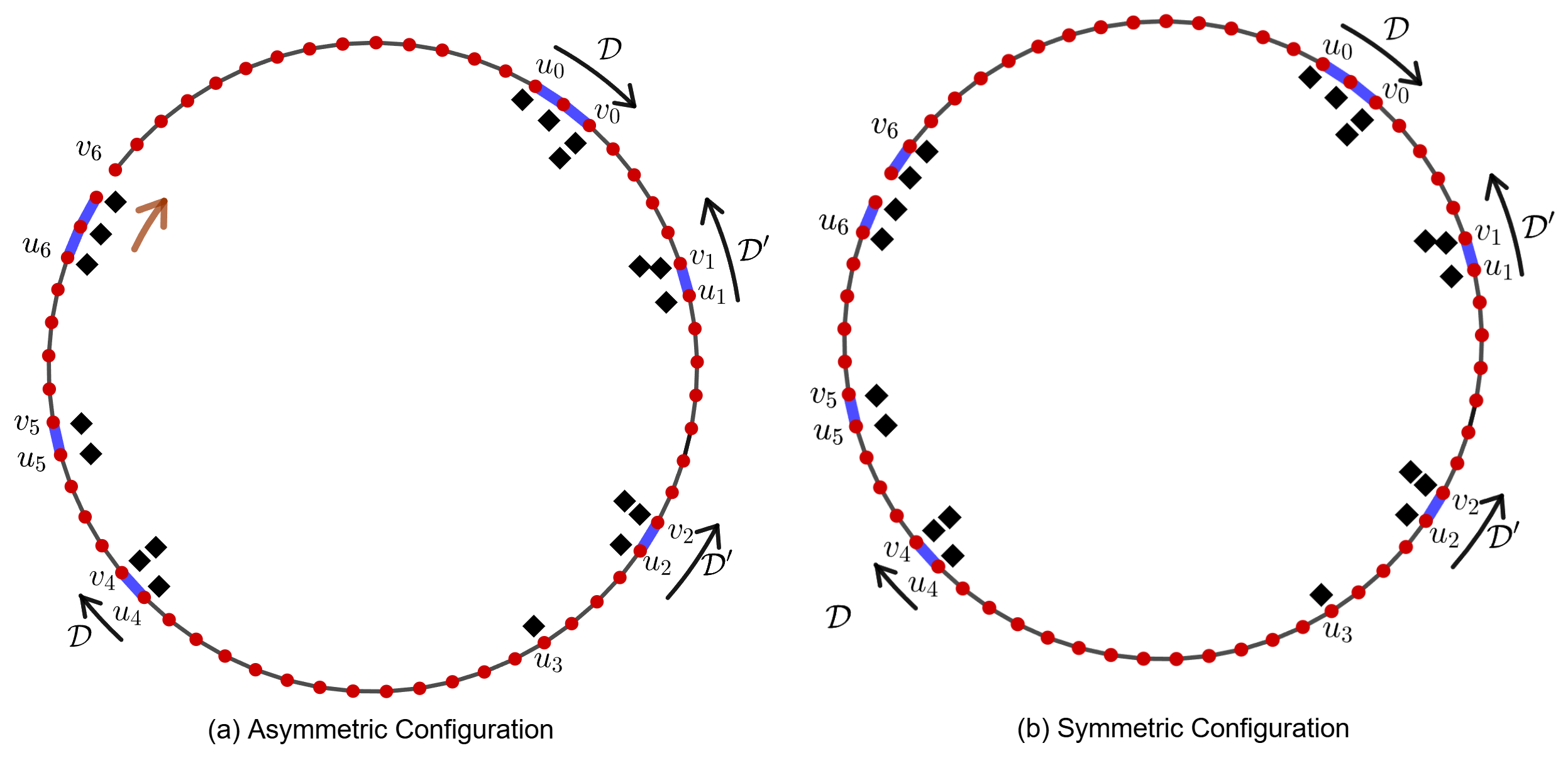}
    \caption{Here in this diagram arc of each chain are marked in blue. The visibly directed chains are marked with their corresponding direction. In each configuration the number of visibly directed chains in $\D$ and $\D'$ are same hence the configurations are not global. In the asymmetric configuration in left, the missing edge defines a global direction but in the symmetric configuration the missing edge is not defining any global direction.  }
    \label{fig:chain}
\end{figure}

\begin{definition}[Global Configuration]\label{def:global configuration}
    Let in a configuration $\C(t)$, there are $p$  chains with visible direction such that among these $p$ chains $p_1$ has a direction $\D$ for some $\D\in \{CW, CCW\}$ and rest $p_2=p-p_1$ chains has direction $\D'$ where $\D'$ is the reverse direction of $\D$. If $p_1\ne p_2$, then the configuration is referred to as a global configuration.  
\end{definition}
In a global configuration, the agents without chirality can agree on a global direction. If $p_1>p_2$ then the agents can agree on direction $\D$ as the global direction. Otherwise, if $p_2>p_1$, then the agents agree on $\D'$ as the global direction.

\begin{definition}[$k$-Link ]
Let $\mathcal{C}(t)=(\mathcal{R},e_t,f_t)$ be a dispersed configuration at time $t$. The agents $r_0,r_1,r_2, \dots, r_p$ are said to form a $k$-Link, in clockwise ($CW$) direction, if for any $i\in \{0,1,2,\dots,p-1\}$
$v(r_{i+1})$ is adjacent occupied node of $v(r_i)$ in direction $CW$ and the  distance $d_{CW}(v(r_i),v(r_{i+1}))< k$.
We denote this $k$-Link by $(r_0,r_1,\dots,r_p)_{CW}$. 
\end{definition}

\begin{definition}[Sub-$k$-Link of a $k$-Link]
    Let  there be two $k$-Links denoted as $KS_1=(b_0,b_1,\dots b_q)_{CW}$ and $KS_2=(a_0,a_1,\dots a_p)_{CW}$, we call $KS_1$ a sub-$k$-Link of $KS_2$ if both the following conditions hold.
    \begin{enumerate}
        \item $b_0=a_i$ for some $i\in \{0,1,\dots,p\}$ and $b_q=a_j$ for some $j\in \{0,1,\dots p\}$ where $i\le j$
        \item $b_x=a_{i+x}$ for any $0\le x\le p-i$
    \end{enumerate}
\end{definition}

We call a $k$-Link $KS=(a_0,a_1,\dots a_p)_{CW}$ \textit{Maximal} if $KS$ is not a sub-$k$-Link of any other $k$-Link. From this point onward, whenever we mention $k$-Link, we will mean maximal $k$-Link unless otherwise specified.
For a $k$-Link $KS=(r_0,r_1,\dots,r_p)_{CW}$, the node $v(r_0)$ is called the\textit{ tail } and the node $v(r_p)$ is called the \textit{head} of the $k$-Link $KS$ and are denoted as $H(KS)$ and $T(KS)$ respectively. By $|KS|$ we define the number of agents in the $k$-Link
\begin{definition}[Movable $k$-Link]
Let $KS$ be a $k$-Link in some configuration $\C(t)$ for some round $t$. $KS$ is called a movable $k$-Link if $d_{CW}(H(KS), T(KS'))>k$, where $KS'$ is the next $k$-Link in the clockwise direction of $KS$.
\end{definition}
Let $KS= KS_0$ be a movable $k$-Link. The \textit{Clockwise Nominee Set} and \textit{Counter clockwise Nominee Set} of $KS$ denoted by $\mathcal{NS}_{CW}(KS)$ and $\mathcal{NS}_{CCW}(KS)$ respectively are sets of agents defined as follows.
\begin{itemize}
    \item If $|KS|\ge 2$, then $\mathcal{NS}_{CW}(KS)=\{r \in A: r=H(KS)\}$ and $\mathcal{NS}_{CCW}(KS)=A \setminus \mathcal{NS}_{CW}(KS)$
    \item Let us consider the case where $|KS|=1$. Let all $k$-Links starting from $KS=KS_0$ in $CCW$ direction be $KS_0, KS_1, \cdots, KS_{\alpha}$. Let $KS_x$ be the first in the above order such that $|KS_x|\ge 2$. Now if $\nexists~ y \in \{1,2,\cdots x\}$ such that $KS_y$ is movable $k$-Link then $\mathcal{NS}_{CW}(KS)=\{r\in A : r=H(KS_j), 0\le j\le x\}$ and $\mathcal{NS}_{CCW}(KS)=A \setminus \mathcal{NS}_{CW}(KS) $ . Otherwise, let us consider $z$ (where $y\le z\le x$)  be such that $KS_z$ be  the last movable $k$-Link in the above mentioned order $\mathcal{NS}_{CW}(KS)= \{r\in A: r=H(KS_j), 0\le j\le y-1\}$. Then, $\mathcal{NS}_{CCW}(KS)=A \setminus \{r\in A: r=H(KS_j), z\le j \le x\}$.
\end{itemize}

\begin{figure}
    \centering
    \includegraphics[scale=0.017]{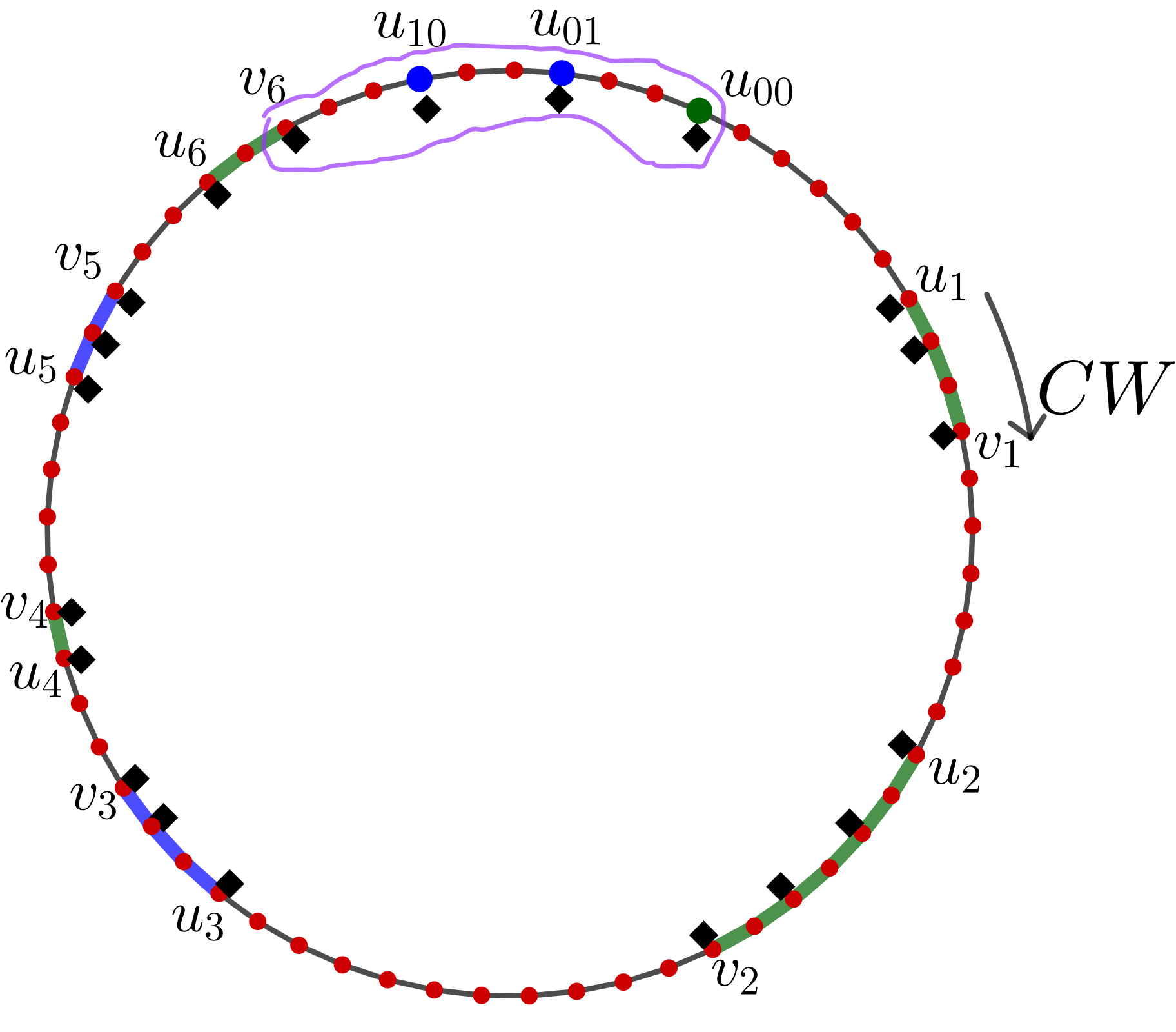}
    \caption{Assuming $k=3$, we have the movable $k$-Links which are marked with green arcs and the $k$-Links with blue arcs are non-movable $k$-Links. Let $KS$ be the $k$-Link with arc $(u_{00},u_{00})_{CW}$, then $\mathcal{NS}_{CW}(KS)=$ agents in arc $(v_6,u_{00})_{CW}$ enclosed in the purple loop and $\mathcal{NS}_{CCW}(KS)=$ rest of the agents}
    \label{fig: klink}
\end{figure}
\begin{definition}[Elected agent set in $\C(t)$] 
    Let $KS_0, KS_1, \cdots, KS_{\alpha}$ be all $k$-Links in $\C(t)$. Let $I=\{1,2, \cdots \alpha\}$ be the index set. Now let us consider the set $S(t)=\{KS_i: KS_i~ \text{is movable $k$-Link and}~ 0\le i \le \alpha\}$. Let $I' \subseteq I$ such that if $i\in I'$ then, $KS_i \in S(t)$. We define the \textit{elected agent set} in $\C(t)$, (denoted as $\mathcal{EAS}(t)$ ) to be $\mathcal{EAS}(t)= \underset{i\in I'}{\bigcup } \mathcal{NS}_{CW}(KS_i)$
\end{definition}

\section{Description of Algorithm \textsc{D-$k$-D\_Scattered\_Dynamic\_Ring}}

In this section, we describe the proposed algorithm \textsc{D-$k$-D\_Scattered\_Dynamic\_Ring} (Algorithm~\ref{D-k-D_scattered}) that solves the Distance-$k$-Dispersion problem defined in Definition~\ref{def: Problem Definition}. 

Initially, the agents do not have chirality. The algorithm works in three phases. In the first phase, the agents execute the algorithm \textsc{Achiral-2-Chiral} until chirality is achieved. It is guaranteed that, within at most $O(ln)$ rounds, all agents will agree on chirality. This novel algorithm can serve as a technique to simulate chirality in systems lacking an inherent sense of direction, under the assumptions of the considered model. After chirality is achieved, agents then execute Algorithm~\ref{Dispersed} (\textsc{Dispersed}) to achieve dispersion in the second phase. The agents execute this algorithm until the dispersion is achieved. Note that, at this point, the open problem regarding solving dispersion in any 1-interval connected ring from any arbitrary configuration is solved. To solve D-$k$-D from this dispersed configuration, the agents execute Algorithm~\ref{Algo:D-k-D} (\textsc{Dispersed To $k-$Dispersed}) until the problem is solved. The pseudocode of \textsc{D-$k$-D\_Scattered\_Dynamic\_Ring} is as follows.

\begin{algorithm}[!ht]
\caption{\textsc{D-$k$-D\_Scattered\_Dynamic ring}($r$,$\mathcal{C}(t_c)$}
\label{D-k-D_scattered}
\LinesNumbered
\uIf{$r$ has no global direction}
    {execute \textsc{Achiral-2-chiral}}
\Else
    {\uIf{in $\mathcal{C}(t_c)$, $\exists$ multiplicity node}
         {execute \textsc{Dispersed}($r$,$\mathcal{C}(t_c)$)}
     \Else
         {execute \textsc{Dispersed to $k$-Dispersed}($r$,$\mathcal{C}(t_c)$)}    
    }    
\end{algorithm}

In the following three subsections, we describe these three phases described above.

\subsection{Algorithm \mdseries{\textsc{Achiral-2-Chiral}}}
In this subsection, we describe the high-level idea of the algorithm \textsc{Achiral-2-Chiral} (Algorithm~\ref{algo:Main}). The main aim of this algorithm is to ensure chirality agreement for the agents within at most $O(ln)$ rounds, where $l(\ge 3)$ is the number of agents and $n$ is the number of nodes in the ring. This novel algorithm not only helps to solve D-$k$-D, but also works as an independent technique to achieve chirality from any achiral initial configuration under the concerned model. Thus, this algorithm proves that the concerned model $\mathcal{M}$ without chirality is computationally equivalent to $\mathcal{M}$ with chirality.  

\begin{algorithm}[!ht]
\caption{\textsc{Achiral-2-Chiral}}
\label{algo:Main}
\LinesNumbered
\eIf{$\C(t_c)$ is asymmetric}
{
    align the direction of asymmetry with clockwise direction\;
}
{
    \eIf{$\C(t_c)$ is global}
    {
        align the clockwise direction with global direction and terminate\;
    }
    {

        \uIf{state= \In}
        {
            execute \textsc{Guider}($r$, $\C(t_c)$)\;
        }
        \uElseIf{state= \Osc}
        {
            execute \textsc{GetDirected}($r$, $\C(t_c)$)\;
        }
        \uElseIf{state=\PreMergeI}
        {

            execute \textsc{PreProcess}($r$, $\C(t_c)$)\;
        }
        \uElseIf{state=\MergeI}
        {
            execute \textsc{Merge-I}($r$,$\C(t_c)$)\;
        }
        \uElseIf{state=\MergeII}
        {
            execute \textsc{Merge-II}($r$,$\C(t_c)$)\;
        }
        \uElseIf{state= \round}
        {
           
            execute \textsc{RoundTheRing-0}($r$,$\C(t_c)$)\;
        }
        \uElseIf{state=\preRound}
        {
            execute \textsc{RoundTheRing-I}($r$,$\C(t_c)$)\;
        }
        \uElseIf{state=\mix}
        {
            execute \textsc{Merge-0To1}($r$,$\C(t_c)$)
        }
    }
}
\end{algorithm}

There are three stages of execution of this algorithm. In the first stage, the goal of this algorithm is to ensure that the configuration becomes free of chains from the class $\mathcal{NFC}$. It is because the chains in the class $\mathcal{NFC}$ can't be made visibly directed, which is required to achieve a global direction agreement in the proposed algorithm. Next, the second and third stage repeats alternately until the chirality is achieved. In the second stage, all visibly undirected chains become visibly directed. So, after the completion of the second stage, all chains in the configuration become visibly directed. In this situation, if chirality is still not achieved, it means that the number of visibly directed chains in the clockwise direction is equal to the number in the counterclockwise direction. To resolve this, the agents proceed to the third stage, which ensures a reduction in the number of chains in the configuration. However, during this process, the newly formed chains may become visibly undirected. As a result, agents repeatedly alternate between Stage 2 and Stage 3 until the worst-case scenario, where there is exactly one visibly directed chain. In such a configuration, all agents agree on the direction of this chain as their global clockwise direction.

At any particular round, the agents can be in any one from the following set of states $\{\In, \round, \preRound, \PreMergeI, \MergeI, \MergeII, \Osc,\mix \}$. In the initial configuration, all agents are in state \In. If in any particular round, the configuration is global or asymmetric, then the agents can easily achieve chirality as described in the preliminaries. Otherwise, according to the algorithm, in any particular round, an agent executes one of the eight procedures depending upon the state it is in at that round. In the pseudocode of the algorithm \textsc{Achiral-2-Chiral}, we mention which procedure agents execute in a particular round under a specific state. The pseudocodes of the Procedures (excluding Procedure~\textsc{Guider()}) are not included in this part due to page limitations and can be found in the appendix.

\subsubsection{Overview of the algorithm}
Here we describe a high level idea of the algorithm \textsc{Achiral-2-Chiral}.
In $\C(0)$ all $l$ agents are placed arbitrarily in an 1-interval connected ring of size $n$. The agents do not have chirality and has a memory of $O(\log n)$.
Initially the agents are in state \In. 
During state \In, an agent execute Procedure~\ref{algo:Guider} (procedure \textsc{Guider}). This procedure guides the agents to different states depending on different conditions. The pseudo code is given below.

\begin{algorithm}[!ht]
\SetAlgorithmName{Procedure}{Procedure}{}
\caption{\textsc{Guider}($r$,$\C(t_c)$)}
\label{algo:Guider}
\LinesNumbered
    \eIf{$\C(t_c))$ has visibly directed  chains}
    {

            \eIf{all  chains are in $\mathcal{FC}$}
            {
               \eIf{all  chains are visibly directed}
               {
                    change state to \MergeII\;
               }
               {
                    Set $round=0$, $osc\_wait=0$ and $ret=0$\;
                    change state to \Osc\;
               }
            }
            {
                set $round=0$\;
                change state to \PreMergeI \;
            }
        
    }
    {
        \eIf{$\exists \  i-$chain where $i\le 1$}
        {
            set $round=0$\;
            change state to \PreMergeI\;
        }
        {
          Set $round=0$, $osc\_wait=0$ and $ret=0$\;
            change state to \Osc\;
        }
    }
\end{algorithm}
The main target of the algorithm is to create visibly directed  chains. If visibly directed  chains in direction $\D$ is greater than visibly directed  chains in direction $\D'$ then all agents align their clockwise direction with $\D$.

Now, it is easy to make an visibly undirected $i$-chain visibly directed if $i\ge 2$ (by performing Procedure~\ref{algo:GetDirected} (Procedure \textsc{GetDirected()}) which will be described in details later in this section). So the primary target of the algorithm is to eliminate all 0-chains and visibly undirected 1-chain by merging them with other  chains. We describe the elimination of 0-chain and visibly undirected 1-chains as follows.

Let the configuration is neither global nor asymmetric and contains 0-chains or visibly undirected 1-chains, for this the agents changes state to \PreMergeI~from \In (See Procedure~\ref{algo:Guider}). In this state agents execute Procedure~\ref{algo:preProcess} (Procedure \textsc{PreProcess}).

\begin{algorithm}[!ht]
\SetAlgorithmName{Procedure}{Procedure}{}
\caption{\textsc{PreProcess}($r$, $\C(t_c)$)} 
\label{algo:preProcess}
\LinesNumbered
\uIf{$round=0$}
{   $round++$\;
    \If{$r$ is an agent in a $0$-chain}
    {
        \If{$f_{t_c}(v(r))>1$ and $ID(r)$ is least on $v(r)$}
        {
            move in clockwise direction\;
        }
    }
}
\uElseIf{$round=1$}
{
    $round++$\;
    \If{$r$ is in a $1$-chain $\land$ both vertex of the  chain are multiplicity}
    {
        \If{$ID(r)$ is lowest in $v(r)$}
        {
            move outward of the  chain\;
        }
    }
}
\Else
{
    set $round=0$\;
    \eIf{In $\C(t_c)$ all chains are either Singleton $0$-chains or, Singleton $1$-chain}
    {
        
        \uIf{In $\C(t_c)$ all chains are Singleton $0$-chain}
         {   
            change state to \round\;
         }
         \uElseIf{In $\C(t_c)$ all chains are Singleton $1$-chain}
         {  
            set $move=0$\;
             change state to \preRound\;
         }
         \Else
         {
            change state to \mix\;
         }
         
    }
    {
        change state to \MergeI\;
    }

}
\end{algorithm}

After the first two round of executing this procedure it is ensured that there does not exists any 0-chain with multiplicity and visibly undirected 1- chain where both the occupied nodes of it are multiplicities. In this scenario there can be four cases.
\begin{itemize}
    \item []\textbf{Case-I:} All  chains are singleton 0-chains. In this case agents changes state to \round.
    \item []\textbf{Case-II:} All  chains are singleton 1-chains and all occupied nodes are singleton. In this case agents changes state to \preRound
    \item []\textbf{Case-III:} There exists 0-chain and 1-chains and all  chains are either 0-chain or 1-chain such that each occupied nodes are singleton. In this case agents changes state to \mix
    \item []\textbf{Case-IV:} There exists an $i-$chain $i \ge 2$ or visibly directed 1-chain. In this case agents changes state to \MergeI.
\end{itemize}

\begin{algorithm}[!ht]
\SetAlgorithmName{Procedure}{Procedure}{}
\caption{\textsc{RoundTheRing-0}($r$,$\C(t_c)$)} 
\label{algo:roundTheRing-0}
\LinesNumbered
\eIf{$round<n$}
{
    $round++$\;
    \If{$r$ is not a part of an $1$-chain $\lor$ $f_{t_c}(v(r))=1$}
    {
        move clockwise\;
    }
}
{
    \eIf{all  chains are Singleton $0$-chain}
    {
        align clockwise direction with global clockwise direction and terminate\;
        
    }
    {
        change state to \In\;
    }
}
\end{algorithm}

In state \round ~agents execute Procedure~\ref{algo:roundTheRing-0} ( Procedure \textsc{RoundTheRing-0}) for $n$ many rounds. To be specific agents move in their clockwise direction for $n$ many rounds unless they merge with another 0-chain. If after $n$ rounds all are still 0-chains then they have common clockwise sense so chirality is achieved else they move to a case similar to case II, III or, IV.

\begin{algorithm}[!ht]
\SetAlgorithmName{Procedure}{Procedure}{}
\caption{\textsc{RoundTheRing-I}($r$,$\C(t_c)$)} 
\label{algo:roundTheRing-1}
\LinesNumbered
\uIf{$round<B$}
{
$round++$\;
\eIf{$\C(t_c)$ is symmetric}
{
    \If{$r$ is a part of the 1-chain with $e_t$}
    {
        change state to \In\;
        move outwards \;
    }
}
{
    \If{$r$ is not part of a $0$-chain such that $f_{t_c}(v(r))>1$}
    {
        \If{$ID_{round+1}(r)=1$}
        {
            move inwards of the chain\;
        }
    }
}
    
}
\uElseIf{$round=B$}
{
    $round++$\;
    \tcp{\textcolor{blue}{In this point, all chains are 0- chains with 2 agents on each occupied node}}
    \If{$ID(r)$ is lowest in $v(r)$ }
    {
       set $move=1$\;
        move in its clockwise direction\;
    }
}
\uElseIf{$round=B+1$}
{
    $round++$\;
    \tcp{\textcolor{blue}{In this point, all chains are 1-chains with all occupied nodes singleton and one agent has $move=1$ the other has $move=0$ in each 1-chain}}
    \If{$move=0$}
    {
        align clockwise direction to the direction it sees the other agent in its  chain\;
        \tcp{\textcolor{blue}{At this point all agents in same chain agree in one particular direction}}
    }
}
\ElseIf{$B+1<round\le B+2n+1$}{
$round++$\;
    \eIf{$\C(t_c)$ is symmetric and $e_t$ is in the chain of $r$}
    {
        wait\;
    }
    {
       \If{$r$ is not a part of an $i-$chain where $i\ge 2$}
       {
            move clockwise\;
       }
    }
}
\Else
{
    \eIf{In $\C(t_c)$ all  chains are Singleton $1-$chain}
    {
        align clockwise direction with global clockwise direction and terminate\;
    }
    {
        change state to \In\;
    }
}
\end{algorithm}

In state \preRound, agents execute Procedure~\ref{algo:roundTheRing-1} ( Procedure \textsc{RoundTheRing-I}) for $2n + B + 2$ rounds, where $B$ is the bit-length of IDs. In the first $B+2$ rounds, if no symmetric configuration arises, agents in the same 1-chain agree on a clockwise direction as follows:

In round $t$ ($1 \le t \le B$), each agent checks the $t$-th bit of its ID (from the right). If it's 1, it moves inward within its chain; otherwise, it stays. If symmetry arises during this phase due to a missing edge $e_{t_c} = \overline{uv}$ in an 1-chain, agents on $u$ and $v$ move away, resulting in Case-III or Case-IV. On the other hand, if symmetry does not occur until the $B-$th round, all 1-chains reduce to 0-chains of two agents due to their unique Ids.
If no symmetry occurs by round $B$, then in round $B+1$, the agent with the lowest ID in each 0- chain moves clockwise. In round $B+2$, the other agent observes this movement and adopts the same direction, ensuring agreement within each chain.
For the next $2n$ rounds, agents follow their clockwise direction unless their chain includes the missing edge or has merged into a larger  chain. note that if after $2n$ rounds all of them are still singleton 1-chain then they agree on chirality as all of them has same clockwise direction. If one pair of chains has different clockwise sense then within $2n$ rounds, at least one chain moves $n$ times, confirming formation of a larger $i$- chain ($i \ge 2$). In this case agents again change state to \In~ leading to Case-IV.

\begin{algorithm}[!ht]
\SetAlgorithmName{Procedure}{Procedure}{}
\caption{\textsc{Merge-0To1}($r$,$\C(t_c)$)} 
\label{algo:merge-0to1}
\LinesNumbered
\eIf{$0$-chain exists $\land$ no chains are in $\mathcal{FC}$}
{
    
    \If{$r$ is part of $0$-chain and at least one adjacent  chain is 1- chain}
    {
        \eIf{exactly one  nearest adjacent 1-chain}
        {
            move in the direction of the 1-chain\;
        }
        {
            move clockwise\;
        }
    }
}
{
    change state to \In\;
}
\end{algorithm}

In state \mix, the agents execute Procedure~\ref{algo:merge-0to1} (Procedure \textsc{Merge-0To1}). This runs until the configuration contain at least one chain in $\mathcal{FC}$ or, the configuration contains no 0-chain. In this procedure an agent in a 0-chain moves in the direction of the nearest 1-chain if the 1-chain is adjacent to it. In case of tie (i.e agent in the 0-chain sees adjacent 1-chains in the same distance in both clockwise and counter clockwise direction) the agent moves in clockwise direction. This ensures after a finite time the configuration contains no 0-chains and there must exist at least one $i-$chain where $i \ge 2$ (Case-IV). 

\begin{algorithm}[!ht]
\SetAlgorithmName{Procedure}{Procedure}{}
\caption{ \textsc{Merge-I}($r$,$\C(t_c)$)} 
\label{algo:Merge-I}
\LinesNumbered
\eIf{$\exists$ chain in $\mathcal{NFC}$}
{
    \uIf{$r$ is in a $0$-chain $\land$ $f_{t_c}(v(r))=1$}
    {
        \uIf{(exactly one of adjacent  chain of $r$ is in $\mathcal{FC}$ in direction $\D$}
        {
            move in the direction $\D$
        }
        \uElseIf{Both adjacent  chains of $r$ are in  $\mathcal{FC}$}
        {
            move clockwise\;
        }
    }
    \ElseIf{$r$ is in a Singleton $1$-chain $C_{\D}(u,v)$}
    {
        \uIf{both adjacent chains of the chain of $C_{\D}(u,v)$ are in $\mathcal{FC}$}
            {
                 move outwards of the chain\;
            }
        \uElseIf{Exactly one adjacent chain $C_{\D}(u',v')$ of $C_{\D}(u,v)$ in  $\mathcal{FC}$ }
        {
            \If{$r$ is at $v$}
            {
                move in direction $\D$\;
            }
        }
       
    }
}
{
    change state to \In\;
}
\end{algorithm}

In state \MergeI, agents execute Procedure~\ref{algo:Merge-I} (Procedure \textsc{Merge-I}) until all 0-chains and visibly undirected 1-chains (i.e., all chains in $\mathcal{NFC}$) are eliminated. Note that for all occupied nodes in all 0-chains and visibly undirected 1-chains in state \MergeI~ are singleton. Since at least one $i-$chain ($i \ge 2$) exists, there exists at least one 0-chain or undirected 1-chain which is adjacent to such a chain or a visibly directed 1-chain. If a 0-chain is adjacent to exactly one such chain in direction $\D$, its agent moves in direction $\D$; if adjacent on both sides, it moves clockwise. This way the 0-chains will get merged in $i-$chains where $i\ge 1$ and makes a larger  chain. 

Now, if a singleton 1-chain is adjacent to an $i-$chain ($i \ge 2$) or a visibly directed 1-chain, its agents move outward. This move ensures that at least one moving agent among these two either gets merged to a larger chain or makes a singleton 0-chain  which will be further merged into a larger chain (using Procedure~\ref{algo:Merge-I}). This process ensures that, in finite time, all 0-chains and undirected 1-chains are eliminated.

\begin{algorithm}[!ht]
\SetAlgorithmName{Procedure}{Procedure}{}
\caption{\textsc{GetDirected}($r$, $\C(t_c)$)} 
\label{algo:GetDirected}
\uIf{$round=0$}
{
    $round++$\;
    \If{$v(r)$ is at terminal of an $i-$chain without visible direction and $i\ge 2$}
    {
        \If{$f_{t_c}(v(r))>1$ and $r$ is not the lowest ID agent on $v(r)$}
        {
            move inwards of the chain\;
        }
    }
}
\uElseIf{$1\le round \le 2B$}       
{
\tcp{\textcolor{blue}{$B$ is the ID length uniform for each agent as homogeneous}}
$round++$\;
\If{$v(r)$ is terminal node of an $i-$chain without visible direction where $i\ge 2$}
{
    \eIf{$round-1=2x+1$ for some $x<B$}
    {
        \If{$osc\_wait=0$}
        {
            \If{$ID_{x+1}(r)$=1}
            {
            \tcp{\textcolor{blue}{$ID_{x}(r)$ is the $x-$th bit of the ID of $r$ from right}}
                $ret=1$\;
                move inwards of the  chain\;
            }
        }
    }
    {
        \eIf{the  chain containing $v(r)$ is visibly directed}
        {
            $osc\_wait=1$\;
        }
        {
            \If{$ret=1$}
            {
                $ret=0$\;
                move outward of the chain\;
            }
        }
    }
}
}
\ElseIf{$ round > 2B$}
{change state to \In}
\end{algorithm}
At this point all agents are in state \In~and there are no 0-chains or visibly undirected 1-chains. Now if the configuration at this time is not asymmetric, the agents check if the configuration is global or not. If the configuration is not global then number of visibly directed  chains having one particular direction must be same with visibly directed 
 chains in the reverse direction (note that the count can be 0 too). In this scenario if there are visibly undirected  chains then agents changes state to \Osc~ and otherwise if all chains are visibly directed they changes state to \MergeII.

In state \Osc, the agents execute Procedure~\ref{algo:GetDirected} (Procedure \textsc{GetDirected()}). The main aim of this procedure is to make all  chains visibly directed. This procedure is executed by each agent for $2B+2$ rounds. In the first round (i.e., $round=0$) , the agents basically ensures that the terminal nodes of a visibly undirected  chain  becomes singleton. This is done by the agents located at a terminal node of a visibly undirected  chain. If the node is not singleton then all agents except the agent having the lowest ID at that node move inwards the  chain.  For the next $2B$ rounds, in each odd round $2x+1$ ($x < B$), singleton agents at terminal nodes check the $(x+1)$-th bit of their ID. If it is 1, they move inward; otherwise, they stay. In each even round $2x$ ($x \le B$), agents that moved in the previous round check if the chain is now visibly directed. If so, they wait; otherwise, they move outward. Note that, after $2B+1$ execution of procedure \textsc{GetDirected()}, all  chains becomes visibly directed due to the fact that there must exist one $x <B$ such that $ID_{x+1}(r_1)\ne ID_{x+1}(r_2)$, for any two agents $r_1$ and $r_2$ at the singleton terminal nodes of the same chain. So, at the $2B+2$-th round all chains become visibly directed and all agents changes state to \In.

\begin{algorithm}
\SetAlgorithmName{Procedure}{Procedure}{}
\caption{\textsc{Merge-II}($r$,$\C(t_c)$)} 
\label{algo:Merge-II}
\LinesNumbered

        \eIf{$\exists$ at least a pair of adjacent visibly directed chains such that directions of them are towards each other }
        {
            \If{$r$ is a part of such a  chain and $e_{t_c}$ is not a part of  chain of $r$ }
            {
                move towards the direction of such adjacent chain\;
            }
        }
        {
            change state to \In\;
        }


\end{algorithm}

Note that as described earlier, if the configuration is not global still now, then the agents change their state to \MergeII. At state \MergeII, the agents execute the Procedure~\ref{algo:Merge-II} (Procedure \textsc{Merge-II()}). The main aim of this procedure is to ensure that number of  chains decrease. Note that when all chains have visible direction and the configuration is not global then there must exist two adjacent  chains whose directions are facing each other. Using this fact, in this procedure, two adjacent  chains facing each other must move towards each other until they become a single  chain. Note that if one such  chain contain a missing edge in a symmetric configuration at some round then in that round no agent of that  chain moves. Since in each round one of the two adjacent chain can move they will eventually merge together. Note that this procedure is executed  until there is no more adjacent visibly directed  chains with their direction towards each other. At this stage the agents change state to \In~ again. 

Due to merging, the number of chains decreases. However, merging two visibly directed chains results in a longer visibly undirected chain. If the configuration is still not global, agents are guided back to state \Osc~ via \textsc{Guider()}, and re-execute \textsc{GetDirected()} to ensure visible direction for all  chains. This process repeats until, in the worst case, only one visibly directed  chain remains, making the configuration global. This way the agents can achieve chirality.

\subsubsection{Correctness and Analysis of Algorithm \mdseries{\textsc{Achiral-2-Chiral}}}
\label{App B}
Before proving the correctness of Algorithm \textsc{Achiral-2-Chiral}, let us first define the necessary notations for the proofs.
We define $\mathcal{Z}_0(t)$ and $\mathcal{Z}_1(t)$ to be the number of 0-chains and visibly undirected 1-chains in $\C(t)$ respectively. We define $\Phi(t)$ as $\Phi(t)= \mathcal{Z}_0(t)+\mathcal{Z}_1(t)$.
We denote the set of all  chains in $\C(t)$ that are in $\mathcal{NFC}$ as $S_{\Phi}(t)$. and the set of rest of the  chains in $\C(t)$ that are in $\mathcal{FC}$ as $S_{\Phi}'(t)$

\begin{lemma}
\label{lemma:merge1 decrease unwanted chains}
   Let $\C(t_0)$ be a configuration where all agents are in state \MergeI~ and $\Phi(t_0)>0$. Then, there exists a chain $C_{\D}(u,v) \in \mathcal{NFC}$ and a round $t_f>t_0$ such that in configuration $\C(t_f)$, all agents that were part of the chain $C_{\D}(u,v)$ have merged and become part of chains in the class $\mathcal{FC}$.
\end{lemma}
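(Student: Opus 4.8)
The plan is to show that the potential $\Phi(t)$, which counts the chains in $\mathcal{NFC}$ (the $0$-chains plus the visibly undirected $1$-chains), is forced to strictly decrease within a bounded number of rounds while agents remain in state \MergeI. First I would establish the structural invariant that the hypothesis $\Phi(t_0)>0$ together with the fact that agents only enter \MergeI~through Procedure~\textsc{PreProcess} (Case-IV) guarantees the coexistence of at least one chain in $\mathcal{NFC}$ and at least one chain in $\mathcal{FC}$ (an $i$-chain with $i\ge 2$ or a visibly directed $1$-chain). Since these two classes of chains lie on a common ring, I would argue that there must be an \emph{adjacent} pair consisting of a chain $C_{\D}(u,v)\in\mathcal{NFC}$ and a chain in $\mathcal{FC}$: walking around the ring from any $\mathcal{NFC}$ chain in a fixed direction, one eventually meets a chain of $\mathcal{FC}$, and the last $\mathcal{NFC}$ chain before that transition is adjacent to an $\mathcal{FC}$ chain. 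I also recall from \textsc{PreProcess} that every occupied node of every chain in $\mathcal{NFC}$ is a singleton at this point, which is exactly the precondition that the movement rules of Procedure~\ref{algo:Merge-I} assume.

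Next I would analyse the two cases of Procedure~\ref{algo:Merge-I} applied to such an adjacent $\mathcal{NFC}$ chain $C_{\D}(u,v)$. If $C_{\D}(u,v)$ is a singleton $0$-chain adjacent to an $\mathcal{FC}$ chain, its single agent moves toward that $\mathcal{FC}$ chain (or clockwise if flanked on both sides), and since the adjacent $\mathcal{FC}$ chain is a fixed target that does not move away, the agent merges into it, absorbing the $0$-chain into an $\mathcal{FC}$ chain and reducing $\Phi$. If $C_{\D}(u,v)$ is a singleton undirected $1$-chain, the rule sends its agents outward (or the agent at the far endpoint $v$ moves in direction $\D$ toward the single adjacent $\mathcal{FC}$ chain); I would argue that at least one of these two agents either merges directly into the neighbouring $\mathcal{FC}$ chain or detaches to form a singleton $0$-chain which, being now adjacent to an $\mathcal{FC}$ chain, is subsequently merged by the first case. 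The key subtlety to handle carefully is the adversarial missing edge: I must check that the adversary can delay but not prevent a merge, because it can block only one edge per round, so a blocked agent simply retries and the progress argument is only slowed by a bounded number of rounds rather than halted.

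The main obstacle I anticipate is ruling out \emph{unwanted side effects} of the simultaneous moves: while we merge the target chain $C_{\D}(u,v)$, other agents elsewhere are also moving, and I must ensure that the net effect on $\Phi$ is a genuine decrease rather than a decrease-plus-creation that cancels out. In particular, when a $1$-chain splits, one agent may create a transient new $0$-chain, so I would track the quantity carefully and argue that any newly created $\mathcal{NFC}$ chain is immediately adjacent to an $\mathcal{FC}$ chain and thus destined for absorption, so the \emph{specific} chain $C_{\D}(u,v)$ named in the statement is the one whose agents are certified to end up in $\mathcal{FC}$ by round $t_f$. Finally I would bound $t_f-t_0$: each merge event costs $O(1)$ moves modulo at most one adversarial blocking round, and since the distances involved within an adjacent pair are at most a constant offset from the chain lengths, the round $t_f$ at which all agents of $C_{\D}(u,v)$ have joined $\mathcal{FC}$ chains exists and is finite, completing the argument.
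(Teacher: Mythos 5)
Your overall architecture matches the paper's: locate an adjacent pair consisting of a chain in $\mathcal{NFC}$ and a chain in $\mathcal{FC}$ (using the fact that \textsc{PreProcess} has already reduced every $\mathcal{NFC}$ chain to a singleton $0$-chain or singleton $1$-chain), then split into the $0$-chain and $1$-chain cases and argue that the agents of that chain are absorbed into $\mathcal{FC}$ chains, noting that $\mathcal{FC}$ chains are stationary targets under Procedure \textsc{Merge-I}. The paper's $1$-chain case is further split according to whether the second adjacent chain is also in $\mathcal{FC}$, with an explicit distance analysis ($d_{\D}(v_1,u_2)$ and $d_{\D'}(u_1,u_1')$ being $\ge 3$, equal to $2$, etc.) and, in the other subcase, a contradiction argument chasing the last $\mathcal{NFC}$ chain in direction $\D'$; your sketch compresses this but is in the same spirit.

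There is, however, one genuine gap: your treatment of the adversarial missing edge. You claim the adversary ``can block only one edge per round, so a blocked agent simply retries and the progress argument is only slowed by a bounded number of rounds.'' In the $1$-interval connected model the adversary may remove the \emph{same} edge in every round, so a blocked agent that merely retries can be blocked forever; counting one lost round per block does not yield termination. The argument the paper actually needs (and uses) is qualitatively different: if the adversary removes the edge that a moving agent of a singleton $0$-chain is about to traverse, the resulting configuration is \emph{asymmetric} by the paper's definition, and the top level of \textsc{Achiral-2-Chiral} then immediately aligns all agents with the direction of asymmetry, i.e.\ chirality is achieved and the whole procedure succeeds. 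So the dichotomy is not ``delayed merge vs.\ completed merge'' but ``completed merge vs.\ global chirality agreement,'' and without invoking the asymmetry mechanism your progress claim does not go through. Relatedly, your worry about transient $\mathcal{NFC}$ chains created by the splitting $1$-chain is legitimate but somewhat misdirected: the lemma only asserts that the agents of one \emph{specific} named chain end up in $\mathcal{FC}$ (the net decrease of $\Phi$ is deferred to Corollary~\ref{Cor: Phi(t) becomes 0}), so you need to certify absorption of that chain's agents, not a monotone decrease of $\Phi$ at every step.
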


\begin{proof}
     Since in $\C(t_0)$, the agents are in state \MergeI, there must exist at least one chain from the class $\mathcal{FC}$. Also, it is given that at least one chain from class $\mathcal{NFC}$ exists in $\C(t_0)$. Then there must exists an adjacent pair of chains $C_{\D}(u_1,v_1) \in \mathcal{NFC}$ and $C_{\D}(u_2,v_2)\in \mathcal{FC}$. First note that, $C_{\D}(u_1,v_1)$ is either a singleton 0-chain or a singleton 1-chain as agents can have state \MergeI~ only after they execute the first two rounds of procedure~\ref{algo:preProcess}, which ensures the claim. Now there can be two cases, either $C_{\D}(u_1,v_1)$ is singleton 0-chain or singleton 1-chain.
     
     \textbf{Case-I:} In this case, let us  consider $C_{\D}(u_1,v_1)$ is a 0-chain (in this case $u_1=v_1$). Let $C_{\D}(u_2,v_2) \in \mathcal{FC}$ is adjacent to $C_{\D}(u_1,u_1)$. Let another adjacent chain of $C_{\D}(u_1,v_1)$ be $C_{\D'}(u_3,v_3)$ (if exists). 
     Let us first consider the case where $C_{\D'}(u_3,v_3)\notin \mathcal{FC}$. In this case the agent on $u_1$ in $\C(t_0)$ moves in direction $\D$ (following Procedure~\ref{algo:Merge-I}) until it reaches the adjacent node of $u_2$, say $u_2'$ that is in $(u_1,u_2)_{\D}$ at a round, say $t_1>t_0$. Note that, the chain $C_{\D}(u_2,v_2)$ in $\C(t_0)$ changes to a 2-chain $C_{\D}(u_2',v_2)$ in $\C(t_1)$. So, for this case taking $t_f=t_1$ and $C_{\D}(u,v)= C_{\D}(u_1,v_1)$ suffices our purpose. Now, for the case where $C_{\D'}(u_3,v_3)\in \mathcal{FC}$, without loss of generality let us assume the agent in $u_1$ moves towards $C_{\D}(u_2,v_2)$. Now similarly the agent merges with $C_{\D}(u_2,v_2)$ at some time $t_2>t_0$ creating the new 2-chain $C_{\D}(u_2',v_2)$ in $\C(t_2)$. For this case, assuming $t_f=t_2$ and  $C_{\D}(u,v)= C_{\D}(u_1,v_1)$, suffices our purpose. Note, here the adversary can not stop the movements of the agent in $u_1$ without the agents agreeing on chirality. Because, stopping the agent to move from a singleton 0-chain by removing an edge would make the configuration asymmetric, which ensures that the agents will get an agreement on chirality.

     \textbf{Case-II:} In this case, let us consider $C_{\D}(u_1,v_1)$ is a singleton 1-chain (in this case $d_{\D}(u_1, v_1)=1$). Let $C_{\D}(u_2,v_2) \in \mathcal{FC}$ be adjacent to $C_{\D}(u_1,v_1)$. Let $r_1$ and $r_2$ be two agents at $u_1$ and $v_1$ respectively in $\C(t_0)$.
     Let $C_{\D'}(u_1',v_1')$ be another adjacent chain of $C_{\D}(u_1,v_1)$ in $\C(t_0)$. Now there can be two subcases.

     \textit{\textbf{1.} $C_{\D'}(u_1',v_1') \notin \mathcal{FC}$ in $\C(t_0)$\textbf{:} }
     Let $C_{\D'}(u_3,v_3)$ be the last chain from the class $\mathcal{NFC}$ in direction $\D'$ from chain $C_{\D}(u_1,v_1S)$ and $C_{\D'}(u_4,v_4) \in \mathcal{FC}$ be adjacent to $C_{\D'}(u_3,v_3)$ in direction $\D'$. If we can prove that all agents of at least one chain among $C_{\D}(u_1,v_1)$ and $C_{\D'}(u_3,v_3)$ in $\C(t_0)$, merges with chains among $C_{\D}(u_2,v_2)$ and $C_{\D'}(u_4,v_4)$ then we are done with this case. For the sake of contradiction, let us assume there exists agents, say $r_1'$, and $r_2'$ in chains $C_{\D}(u_1,v_1)$ and $C_{\D'}(u_3,v_3)$ respectively  (in $\C(t_0)$) that do not merge with any of $C_{\D}(u_2,v_2)$ or $C_{\D'}(u_4,v_4)$. Note that $r_1'$ and $r_2'$, can not be at $v_1$ and $v_3$ in $\C(t_0)$ respectively. This is because, otherwise according to Procedure~\ref{algo:Merge-I} agent $r_1'$ and $r_2'$ first move in direction $\D$ and $\D'$ respectively and forms singleton 0-chains adjacent to  $C_{\D}(u_2,v_2)$ or, $C_{\D'}(u_4,v_4)$ respectively in $\C(t_0+1)$ (if not already merged with $C_{\D}(u_2,v_2)$ or $C_{\D'}(u_4,v_4)$). Again, these agents execute Procedure~\ref{algo:Merge-I} and $r_1'$ starts moving in direction $\D$ and $r_2'$ in direction $\D'$. Without loss of generality let us assume that $r_1'$ be the first agent that reaches $u_2'$ before $r_2'$ reaches $u_4'$ where, $u_2'$ is the vertex adjacent to $u_2$ in the arc $(v_1,u_2)_{\D}$ and $u_4'$ is adjacent to $u_4$ in $(v_3,u_4)_{\D'}$. This implies $r_1'$ merges with $C_{\D}(u_2,v_2)$ and becomes a part of the chain in $\mathcal{FC}$, which contradicts our assumption. So, now let $r_1'$ and $r_2'$ is at $u_1$ and $u_3$. As described in the procedure~\ref{algo:Merge-I}, there must exists a time $t_3>t_0$ when both agents those were on $v_1$ and $v_3$ merges with the chains $C_{\D}(u_2,v_2)$ and $C_{\D'}(u_4,v_4)$ respectively. So, in $\C(t_3)$, $r_1'$ forms a singleton 0-chain that is adjacent to the chain $C_{\D}(u2',v_2)$ and $r_2'$ also forms a singleton 0-chain that is adjacent to the chain $C_{\D'}(u_4',v_4)$ where, $u_2'$ and $u_4'$ are adjacent vertices of $u_2$ and $u_4$ respectively which are in the arcs $(v_1,u_2)_{\D}$ and $(v_3,u_4)_{\D'}$ respectively. 
     Let without loss of generality, $d_{\D}(u_1,u_2')\le d_{\D'}(u_3,u_4')$. Then in $\C(t_3+d_{\D}(u_1,u_2')-1)$, $r_1'$ reaches $u_2''$, where $u_2''$ is adjacent to $u_2'$ in direction $\D'$ from $u_2'$. This implies $r_1'$ merges with the chain $C_{\D}(u_2,v_2)$ and forms $C_{\D}(u_2'',v_2) \in \mathcal{FC}$. This contradicts our assumption. And so, the claim is true. 

     \textit{\textbf{2.} $C_{\D'}(u_1',v_1') \in \mathcal{FC}$ in $\C(t_0)$\textbf{:} } First we consider the distances $d_{\D}(v_1,u_2)$ and $d_{\D'}(u_1,u_1')$.
     First let us assume that $d_{\D}(v_1,u_2), d_{\D'}(u_1,u_1')\ge 3$. In this case, while executing Procedure~\ref{algo:Merge-I}, $r_1$ and $r_2$ first move away from each other and create two 0-chains. After that $r_1$ moves towards $u_1'$ in direction $\D'$ an $r_2$ moves in direction $\D$ until they reach $u_2'$ and $(u_1')'$ respectively where $u_2'$ is adjacent to $u_2$ in direction $\D'$ and $(u_1')'$ is adjacent to $u_1'$ in direction $\D$. This way both agents of chain $C_{\D}(u_1,v_1)$ from configuration $\C(t_0)$ merges with chains among $C_{\D}(u_2,v_2)$ and $C_{\D'}(u_1.v_1')$ in $\C(t_1)$ where $t_1=t_0+\max\{d_{\D}(v_1,u_2'), d_{\D'}(u_1,(u_1')')\}$. Now let exactly one among $d_{\D}(v_1,u_2)$ and  $d_{\D'}(u_1,u_1')$ be greater than 3 and the other is exactly 2. Without loss of generality, let $d_{\D}(v_1,u_2)\ge 3$ and  $d_{\D'}(u_1,u_1')=2$. In this case, as described earlier in $\C(t_0)$ both $r_1$ and $r_2$ move in the opposite of each other. This move ensures that $r_1$ merges with $C_{\D'}(u_1',v_1')$ after moving in $\C(t_0)$ and forms a new chain $C_{\D'}((u_1')',v_1') \in \mathcal{FC}$ in $\C(t_0+1)$ and $r_2$ in $\C(t_0+1)$ creates a 0- chain which is adjacent to $C_{\D}(u_2,v_2)\in \mathcal{FC}$ in direction $\D$ and also to $C_{\D'}((u_1')', v_1')$ in direction $\D'$. Let without loss of generality according to Procedure~\ref{algo:Merge-I}, $r_2$ moves in Direction $\D$ in $\C(t_0+1)$. Now in $C(t_0+d_{\D}(v_1,u_2))$, $r_2$ also merges with the chain $C_{\D}(u_2,v_2)$ from the configuration $\C(t_0)$ creating a new chain $C_{\D}(u_2',v_2) \in \mathcal{FC}$, where $u_2'$ is adjacent to $u_2$ in direction $\D'$. So for these cases also all agents of chain $C_{\D}(u_1,v_1)$ from configuration $\C(t_0)$ merge and form chains in class $\mathcal{FC}$. Now if both  $d_{\D}(v_1,u_2)=d_{\D'}(u_1,u_1')=2$, then in $\C(t_0+1)$, $r_1$ merges and creates chain $C_{\D'}((u_1')',v_1') \in \mathcal{FC}$ and $r_2$ 
     merges and creates chain $C_{\D'}(u_2',v_2) \in \mathcal{FC}$. 
     Note that, the adversary can not stop any move by any agent when all of them are in state \MergeI~ by removing edges without agents achieving chirality agreement. This is because, to do that adversary has to make the configuration asymmetric, and as a result of that, agents will achieve chirality.
     
     So, for any configuration $\C(t_0)$ for any $t_0$, if agents are in state \MergeI~ and $\Phi(t_0) > 0$ then we can see that there exists one chain from the class $\mathcal{NFC}$ in $\C(t_0)$ and $t_f>t_0$ such that all agents of that chain merges with other chains to form a chain in $\mathcal{FC}$ in $\C(t_f)$. Also, $t_f-t_0 \approx O(n)$ for all of these cases. 
\end{proof}

\begin{corollary}
    \label{Cor: Phi(t) becomes 0}
Let in $\C(t_0)$ all agents are in state \MergeI~ and $\Phi(t_0)>0$. Then $\exists~t_f>0$ such that in $\C(t_f)$, $\Phi(t_f)=0$ and agents are in state \In. 
\end{corollary}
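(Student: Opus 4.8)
The plan is to layer a potential-function argument on top of Lemma~\ref{lemma:merge1 decrease unwanted chains}. Tracking $\Phi(t)$ directly is inconvenient because it is not monotone: Procedure~\textsc{Merge-I} may split a singleton $1$-chain of $\mathcal{NFC}$ into two $0$-chains, momentarily \emph{raising} $\Phi$ by one. Instead I would track the number of agents occupying chains of $\mathcal{NFC}$, call it $\Psi(t)$. Since the first two rounds of Procedure~\textsc{PreProcess} guarantee every chain of $\mathcal{NFC}$ is singleton when state \MergeI~is reached, we have $\Psi(t)=\mathcal{Z}_0(t)+2\mathcal{Z}_1(t)$, and crucially $\Psi(t)=0$ if and only if $\Phi(t)=0$. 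The whole argument then reduces to driving $\Psi$ down to $0$.

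First I would show that $\Psi$ is non-increasing for as long as all agents stay in state \MergeI. Two observations suffice. (i) Under Procedure~\textsc{Merge-I} only agents sitting in a $0$-chain or in a singleton $1$-chain are permitted to move, so no agent already belonging to a chain of $\mathcal{FC}$ ever moves. (ii) Whenever a moving agent is absorbed, it settles on the node \emph{adjacent} to a terminal of an $i$-chain ($i\ge 1$, counting visibly directed $1$-chains), so the chain lengthens to one of length $\ge 2$ and therefore remains in $\mathcal{FC}$; hence an agent never leaves $\mathcal{FC}$ once it has joined. The only remaining type of move is the outward split of a singleton $1$-chain into two $0$-chains, which keeps both agents inside $\mathcal{NFC}$ and so leaves $\Psi$ unchanged. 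Thus no round ever increases $\Psi$.

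Next I would apply Lemma~\ref{lemma:merge1 decrease unwanted chains} repeatedly. As long as $\Phi(t)>0$ and the agents are still in \MergeI, the lemma produces a round $t'>t$ (with $t'-t=O(n)$) at which all agents of some chain $C_{\D}(u,v)\in\mathcal{NFC}$ have been absorbed into chains of $\mathcal{FC}$; since that chain carried at least one agent, $\Psi$ strictly decreases by that round. As $\Psi$ is a non-negative integer bounded above by $l$ and, by the monotonicity above, never recovers, it can strictly decrease at most $l$ times, so after $O(ln)$ rounds we reach a configuration $\C(t_f)$ with $\Psi(t_f)=0$, i.e.\ $\Phi(t_f)=0$. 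At that point no chain of $\mathcal{NFC}$ remains, so the outer guard of Procedure~\textsc{Merge-I} testing for the existence of a chain in $\mathcal{NFC}$ fails, every agent executes its \emph{else} branch and switches to state \In, and no further agent moves; hence $\C(t_f)$ has $\Phi(t_f)=0$ with all agents in state \In, as required.

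The main obstacle I expect is justifying the non-increase of $\Psi$ rigorously, specifically the claim that absorbing an $\mathcal{NFC}$ agent can never turn an $\mathcal{FC}$ chain back into $\mathcal{NFC}$. The delicate case is a visibly directed $1$-chain, whose single singleton endpoint could in principle be covered and thereby destroy its visible direction; I would rule this out by reading off from Procedure~\textsc{Merge-I} and the movement pattern in the proof of Lemma~\ref{lemma:merge1 decrease unwanted chains} that a merging agent always lands on the node adjacent to the terminal rather than on the terminal itself, so the chain lengthens to an $i$-chain with $i\ge 2$ and stays in $\mathcal{FC}$. A secondary point to record is that the adversary cannot stall a merge indefinitely: freezing a moving agent by deleting an edge would render the configuration asymmetric and let the agents agree on chirality — the escape hatch already accounted for inside Lemma~\ref{lemma:merge1 decrease unwanted chains}, so no additional work is needed in the corollary.
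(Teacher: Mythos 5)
Your proof is correct and follows the same skeleton as the paper's — repeated application of Lemma~\ref{lemma:merge1 decrease unwanted chains}, with each application costing $O(n)$ rounds and at most $O(l)$ applications needed, followed by the observation that once no $\mathcal{NFC}$ chain remains the guard in Procedure~\textsc{Merge-I} fails and all agents switch to \In. The genuine difference is your bookkeeping. The paper argues directly on $\Phi$, asserting that each invocation of the lemma yields $\Phi(t_1)\le\Phi(t_0)-1$ and then recursing; this glosses over exactly the issue you identify, namely that $\Phi$ is not monotone because the outward split of a singleton $1$-chain (when both its neighbours are in $\mathcal{FC}$) creates two $0$-chains and transiently raises $\Phi$ by one, so a single ``decrease by one'' per lemma application does not immediately bound the number of iterations. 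Your substitute potential $\Psi(t)$, the number of agents residing in $\mathcal{NFC}$ chains, repairs this: it is invariant under the split, never increases because agents absorbed into $\mathcal{FC}$ land adjacent to a terminal and produce an $i$-chain with $i\ge2$ (so no chain ever falls back into $\mathcal{NFC}$), strictly decreases at each lemma application, and satisfies $\Psi=0\iff\Phi=0$ since every $\mathcal{NFC}$ chain carries at least one agent. This buys a rigorous $O(l)$ bound on the number of iterations that the paper's version only implicitly assumes. One small point applies to both proofs: the lemma's guarantee is conditional on the adversary not forcing an asymmetric configuration (in which case chirality is achieved instead), so strictly speaking the corollary's conclusion should carry the ``or chirality is achieved'' disjunct, as it does in the statements that later cite it; you acknowledge this escape hatch at the end, which is the right place to record it.
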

\begin{proof}
    Since agents are in state \MergeI~ and $\Phi(t_0)>0$, by Lemma~\ref{lemma:merge1 decrease unwanted chains}, there exists a round, say $t_1$ within $O(n)$ rounds from $t_0$ such that $\Phi(t_1)\le \Phi(t_0)-1$. Now if $\Phi(t_1)= 0$, we are done. Otherwise, in $\C(t_1)$, there must exist at least one chain in $\mathcal{NFC}$, and, so agents are still in state \MergeI. Now using this same argument recursively we can conclude that within $O(ln)$ ($l$ being the number of agents) rounds from $t_0$, there exists a round $t_f'$ such that $\Phi(t_f)$ becomes 0 and agents are still at state \MergeI~in $\C(t_f')$. Now, in $\C(t_f'+1)$, since there are no chains in $\mathcal{NFC}$, agents change state to \In. So, taking $T_f=T_f'+1$ is sufficient for the proof. 
\end{proof}
\begin{lemma}
\label{lemma: 0- merge-1 to FC}
    Let $\C(t_0)$ be a configuration in which all agents are in state \mix. Then, within finite rounds, there will be a time $t_f$ such that exactly one of the following occurs.
    \begin{enumerate}
        \item all agents achieve chirality at some round $t$ such that $t_0\le t\le t_f$.
        \item In $\C(t_f)$, $\Phi(t_f)=0$ and all agents are in state \In.
    \end{enumerate}
\end{lemma}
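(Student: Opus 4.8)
The plan is to follow a single ``boundary'' $0$-chain as it is forced to merge into an adjacent $1$-chain, and then to reduce the resulting situation to the already-proved \MergeI~analysis. First I would record the shape of $\C(t_0)$: by the case split of Procedure~\ref{algo:preProcess} that sends agents into state \mix, every chain of $\C(t_0)$ is either a singleton $0$-chain or a singleton $1$-chain, with at least one of each present, and consequently no chain lies in $\mathcal{FC}$. Since the ring carries at least one $0$-chain and at least one $1$-chain, reading the chains cyclically there must be a $0$-chain whose \emph{adjacent chain} (in the sense of Procedure~\ref{algo:merge-0to1}) is a $1$-chain; call it a boundary $0$-chain. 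By Procedure~\ref{algo:merge-0to1}, every boundary $0$-chain advances one hop per round toward its nearest adjacent $1$-chain, breaking a two-sided tie by its own clockwise sense, while interior $0$-chains and all $1$-chain agents stay put.

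The crucial step is to show the adversary cannot indefinitely stall this approach. A boundary $0$-chain occupying a node $w$ must cross the edge joining $w$ to the next node of the gap; to block it the adversary is forced to delete an edge incident to $w$. But a $0$-chain is an $i$-chain with $i=0$ whose extended arc is exactly the triple $(w^-,w,w^+)$, and any missing edge incident to $w$ lies strictly inside this extended arc and off its center, so $\C(t_c)$ satisfies the \emph{asymmetric} predicate (and it cannot be symmetric, since that would require the edge to sit inside a chain arc of odd length, which is impossible for a single occupied node). Because Algorithm~\ref{algo:Main} tests asymmetry before dispatching on the state, the agents then align their clockwise sense with the direction of asymmetry and chirality is achieved, giving conclusion~(1). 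Hence in every round either conclusion~(1) already holds, or no boundary $0$-chain is obstructed and it moves one hop closer. As the gap has length at most $n$, within $O(n)$ rounds some boundary $0$-chain reaches the node adjacent to an endpoint of its target $1$-chain, and the three now-consecutive occupied singletons form a $2$-chain, which lies in $\mathcal{FC}$.

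The appearance of an $\mathcal{FC}$ chain falsifies the guard of Procedure~\ref{algo:merge-0to1}, so at that round, say $t_1$, all agents switch to state \In. From here I would invoke the established machinery. If $\Phi(t_1)=0$, then conclusion~(2) holds with $t_f=t_1$. Otherwise some chain of $\mathcal{NFC}$ survives, so Procedure~\ref{algo:Guider} routes the agents to \PreMergeI; the first two rounds of Procedure~\ref{algo:preProcess} change nothing, as all occupied nodes are singletons, and since a $2$-chain is present the procedure detects Case-IV and sends the agents to \MergeI. Applying Corollary~\ref{Cor: Phi(t) becomes 0} to this \MergeI~configuration (where $\Phi>0$) produces a round $t_f$ with $\Phi(t_f)=0$ and all agents in state \In, which is conclusion~(2). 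Any intermediate asymmetric or global snapshot the adversary creates along the way is caught at the top of Algorithm~\ref{algo:Main} and yields conclusion~(1) instead; summing the $O(n)$ rounds spent in \mix~with the $O(ln)$ rounds of Corollary~\ref{Cor: Phi(t) becomes 0} bounds $t_f-t_0$ by $O(ln)$, so the process is finite and exactly one of the two conclusions holds.

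The main obstacle I expect lies in the adversary argument of the second paragraph: one must verify that every edge deletion actually obstructing a moving boundary $0$-chain forces the asymmetric predicate—ruling out the possibility that it merely produces a symmetric snapshot, which would not deliver chirality—and that this remains true round after round as the $0$-chain migrates through its gap and its incident edges change. A secondary point needing care is confirming that moving boundary $0$-chains never collide with one another before any of them merges, which holds because each boundary $0$-chain moves \emph{away} from the interior of its $0$-chain run and toward its flanking $1$-chain, so the earliest event is always a $0$-chain-into-$1$-chain merge.
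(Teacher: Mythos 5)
Your proposal is correct and follows essentially the same route as the paper's proof: pick an adjacent singleton $0$-chain/$1$-chain pair, march the $0$-chain agent toward the $1$-chain, observe that any blocking edge deletion forces an asymmetric (never symmetric) configuration and hence chirality, and once a $2$-chain in $\mathcal{FC}$ appears, route through \In~to \MergeI~and invoke Corollary~\ref{Cor: Phi(t) becomes 0}. Your explicit check that a deleted edge incident to a singleton $0$-chain cannot produce a symmetric snapshot, and your tracing of the \In$\to$\PreMergeI$\to$\MergeI~transition, merely make explicit what the paper asserts more tersely.
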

\begin{proof}
    In $\C(t_0)$, all agents are in state \mix. Now, if $\C(t_0)$ does not contain any singleton 0-chain or, has at least one chain in $\mathcal{FC}$, then all agents change their state first to \In~. Let at $\C(t_0+1)$, all chains be in $\mathcal{FC}$. For this case, taking $T_f=t_0+1$ suffices. On the other hand, if at $\C(t_0+1)$ there is at least one chain in $\mathcal{NFC}$, then within constant round the agents changes state to \MergeI and by Corollary~\ref{Cor: Phi(t) becomes 0}, there exists a time $t_f\le t_0+O(ln)$ such that in $\C(t_f)$, either chirality is achieved, or $\Phi(t_f)=0$ where all agents are in state \In, and we are done.  
    
    \noindent So, let us now assume that $\C(t_0)$ contains at least one  0-chain and no chains are in $\mathcal{FC}$. Also, as described in Lemma~\ref{lemma:merge1 decrease unwanted chains}, an agent changes state to \mix~ at the third round of execution of Procedure~\ref{algo:preProcess}. So, all chains of $\C(t_0)$ that are in $\mathcal{NFC}$ must be singleton chains. At this point, as the agents are in state \mix,~ they begin executing Procedure~\ref{algo:merge-0to1} (Procedure \textsc{Merge-0To1()}). Note that, in $\C(t_0)$ there must exists two adjacent chains one singleton 0-chain $Ch_1 = C_{\D}(u_1,u_1)$ and another singleton 1-chain $Ch_2=C_{\D}(u_2,v_2)$. According to the procedure~\ref{algo:merge-0to1}, agent in $Ch_1$, say $r$, moves in direction $\D$, towards $Ch_2$ until it reaches $u_2'$ at time say $t_1$ for some $t_1>t_0$, where $u_2'$ is adjacent to $u_2$ in direction $\D$. Note that between $t_0$ and $t_1$, if the adversary tries to stop any move by some agent, the configuration becomes Asymmetric and agents agree on a chirality, and then we are done. So, let us assume chirality can never be achieved in finite time. In this case, within finite time (i.e., in $O(n)$ rounds), the agent $r$ merges with $Ch_2$ from the configuration $\C(t_0)$ and together they form a new chain $C_{\D}(u_2',v_2) \in \mathcal{FC}$. Now, as argued initially in this proof, within $O(1)$ rounds after $t_1$, there exists a time $t_1'$ when the agents change state to either \MergeI, or \MergeII, or \Osc~.
    In $\C(t_1')$ agents were in state \In~ with $\Phi(t_1')=0$ if agents change state to \MergeII, or, \Osc, at round $t_1'$. So, taking $t_1'=t_f$ is sufficient for this case. Now in $\C(t_1')$ if agents change state to \MergeI, then from Corollary~\ref{Cor: Phi(t) becomes 0} ,
     within further $O(ln)$ rounds there exists a round $t_2$, such that either agents achieve chirality at round $t_2$ or,  agents will have $\Phi(t_2)=0$ and state \In~in $\C(t_2)$. So, taking $t_f=t_2$ suffices for this case. 
     
\end{proof}

\begin{lemma}
\label{lemma:preround to certain scenarios}
    Let $\C(t_0)$ be the first configuration where all agents are in state \preRound\ for some $t_0>0$. Then, exactly one of the following statements is true.
    \begin{enumerate}
    \item there is a $t_1<t_0+2n+B+1$ such that in $\C(t_1)$ all agents change state to \In, then in further constant rounds to one of the following states \mix~or \MergeI, or \Osc
        \item  in $\C(t_0+2n+B+1)$, all agents achieve chirality.
        \item in $\C(t_0+2n+B+1)$, there exists a chain from the class $\mathcal{FC}$.
    \end{enumerate}
\end{lemma}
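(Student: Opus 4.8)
The plan is to follow the three-phase structure of Procedure~\textsc{RoundTheRing-I} and to show that the execution branch actually taken lands in exactly one of the three listed outcomes. Since $\C(t_0)$ is the \emph{first} configuration in which every agent is in state \preRound, it is reached from Case-II of Procedure~\textsc{PreProcess}, so $\C(t_0)$ consists entirely of singleton $1$-chains. I would split the analysis into the bit-reduction phase ($round<B$), the direction-agreement rounds ($round=B$ and $round=B+1$), and the clockwise rotation phase ($B+1<round\le B+2n+1$), tracking how a surviving singleton $1$-chain evolves through each.

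First I would analyze the bit-reduction phase. If at some round $t_1<t_0+B$ the adversary makes the configuration symmetric — which, since every surviving chain is a singleton $1$-chain (an odd chain) while the rest are $0$-chains, can only happen by deleting the internal edge of a surviving $1$-chain — then the two agents of that chain change state to \In\ and move outward, and invoking \textsc{Guider} with the first rounds of \textsc{PreProcess} on the resulting configuration routes the agents to \mix, \MergeI, or \Osc; this is outcome~1 with witness $t_1<t_0+2n+B+1$. If no symmetry arises, I would use the uniqueness of IDs: for the two agents of a singleton $1$-chain, a round in which their current bits agree leaves it a $1$-chain (both stay, or both move inward and swap), whereas the first round whose bit differs forces exactly one agent inward, collapsing the chain to a $0$-chain carrying two agents, after which the guard ``move inward only if not already on a multiplicity $0$-chain'' freezes it. Because the IDs are distinct $B$-bit strings such a differing bit occurs within the first $B$ rounds, so by round $B$ every chain is a $0$-chain with two agents, matching the pseudocode comment.

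Next I would handle the agreement and rotation phases. At $round=B$ the lowest-ID agent of each $0$-chain sets $move=1$ and steps clockwise, re-forming a singleton $1$-chain; at $round=B+1$ its partner (with $move=0$) observes on which side the mover sits and adopts the matching clockwise sense, so the two agents of every chain now share one clockwise direction. During the following $2n$ rotation rounds each chain advances clockwise unless it is part of an $i$-chain with $i\ge2$ or is held by a missing edge. The key claim is a merge criterion: if the chains do \emph{not} all share a single clockwise sense, then reading the senses cyclically around the ring there must be an adjacent pair that moves toward one another, and since $1$-interval connectivity lets the adversary block at most one chain per round while the separating gap (of length $<n$) still closes by at least one per round, this pair merges into an $i$-chain with $i\ge2$, which lies in $\mathcal{FC}$, within fewer than $n\le2n$ rounds — outcome~3. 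Contrapositively, if no chain of $\mathcal{FC}$ ever appears during these $2n$ rounds, then all chains share one global clockwise sense and the configuration is still entirely singleton $1$-chains at $t_0+2n+B+1$; the agents detect this in the final branch, align to the common global clockwise direction, and terminate with chirality — outcome~2.

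Finally I would argue the trichotomy is clean: the branch is decided by whether symmetry appears in the first $B$ rounds (outcome~1) and, failing that, whether a merge occurs in the rotation phase (outcome~3) or not (outcome~2), so precisely one statement describes the \preRound\ phase. I expect the main obstacle to be controlling the adversary during the agreement and rotation phases rather than the phase bookkeeping. Concretely, I must show that the single missing edge can only \emph{delay} and never \emph{prevent} an approaching merge (the ``gap shrinks by at least one per round'' estimate), and that any genuinely desynchronizing edge deletion — for instance one blocking the clockwise step at $round=B$, or splitting the direction-agreement — is intercepted by the top-level asymmetric/global test of \textsc{Achiral-2-Chiral}, which then yields chirality outright (subsumed under outcome~2) instead of leaving the agents in an ill-defined state. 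Discharging these adversarial corner cases rigorously is where the real work lies.
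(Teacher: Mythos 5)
Your proposal is correct and follows essentially the same route as the paper's proof: the same three-phase decomposition of \textsc{RoundTheRing-I} (bit-reduction, the two agreement rounds, the $2n$-round rotation), the same handling of an adversarial symmetric interruption as outcome~1, ID-uniqueness forcing collapse to two-agent $0$-chains by round $B$, and the observation that the single missing edge can block at most one of two approaching locally-chiral chains per round so that either a chain in $\mathcal{FC}$ forms (outcome~3) or all chains keep a common sense and chirality is declared (outcome~2). The only differences are cosmetic (you bound the merge by $n$ rounds where the paper uses the looser $2n$, and you flag the round-$B$/$B{+}1$ adversarial corner cases slightly more explicitly).
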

\begin{proof}
    Since the agents change state to \preRound, in $\C(t_0)$, all chains are singleton 1-chains. Then the agents start to execute Procedure~\ref{algo:roundTheRing-1} for $2n+B+2$ rounds. We first claim the following.
    \begin{claim}
        Let all agents at $\C(t_0)$ be in state \preRound~ for the first time. Then, exactly one of the following is true.
        \begin{enumerate}
            \item $\exists~ t_1\le t_0+B $ such that in $\C(t_1)$ all agents changes state to \In~ and then in further $O(1)$ rounds change state to exactly one of  \mix, \MergeI~ or, \Osc.
            \item in $\C(t_0+B+1)$ all chains are singleton 1-chain, and for any chain, both agents have the same chirality.
        \end{enumerate}
    \end{claim}
   
   \textbf{Proof of Claim:}
    Note that during the first $B$ execution of Procedure~\ref{algo:roundTheRing-1}, in round $t_0+x$ ($0<x<B-1$), each agent checks the $x+1$-th bit of their ID and if it is 1 it moves in the direction of the other agent in the chain otherwise it stays at the same node. Note that, if at some round $t_0+x$ both have their $x+1$-th bit 1, they cross each other and the 1-chain remains a 1-chain. This is also true if both of them have their $x+1$th bit as 0. Note that, during this move, the adversary can interrupt the movements by removing one edge of a chain making the configuration symmetric. In this case, the agents, say $r_1$ and $r_2$, in the chain with a missing edge move outwards on that round, and all agents change state to \In. 
    Note that since $l\ge 3$ in $\C(t_0)$, there are at least two chains.
    Now, suppose that after this move, agents $r_1$ and $r_2$ each create a singleton 0-chain. In this case, the agents currently in state \In, eventually transition to state \mix~ in 3 more rounds.
On the other hand, if either $r_1$ or $r_2$ merges with another chain during the outward move and forms an $i$-chain with $ i\ge 2$, then the agents transition to state \MergeI~ within 3 more rounds, provided that there are still chains remaining in the set  $\mathcal{NFC}$.
If no such chains remain, the agent eventually transitions to the state \Osc in the next round. So, if we assume that, to interrupt the agent's move, the adversary removes an edge to make the configuration symmetric, then this leads to the first point of claim 1. 

So, let us now assume that the adversary never removes any edge to interrupt the agent's move that would make the configuration symmetric. For this case, let us choose any singleton 1-chain arbitrarily. Let $r_1$ and $r_2$ be the agents on this chain. Let the ID of $r_1$ and $r_2$ first differs at the $y$-th bit from the right. So at round $t_0+y-1$, one of $r_1$ and $r_2$ moves towards the other, and the other remains. So, they form a 0-chain with two agents. Now, since all agents have different IDs and all of them differ at some bit $y\le B$, at $\C(t_0+B)$ all chains are 0-chains with two agents. Now, let us consider one such arbitrary 0-chain with two agents $r_1$ and $r_2$. At round $t_0+B$, the lowest ID agent of them (W.L.O.G say $r_1$) moves in its clockwise direction, making each chain a singleton 1-chain again. Now, at round $t_0+B+1$ when $r_2$ sees that $r_1$ is one hop away in a particular direction, $r_2$ also agrees on the same clockwise direction as $r_1$. We call a singleton 1-chain \textit{locally chiral}, if both the agents on the chain have the same clockwise direction. So, we proved that, if the adversary does not interrupt any move, all chains will be locally chiaral in $\C(t_0+B+1)$. So, we prove that point 2 of the claim is true if point 1 is not true.
\qed \\
So, if not already changed to one of the following states i.e., \mix, \MergeI, \Osc, from round $t_0+B+1$, all agents start to move in clockwise direction for $2n$ consecutive rounds if they are still part of 1-chains and no missing edge in their chain. We claim that if all agents do not have the same clockwise direction, then in this $2n$ round, there must exist one pair of chains that get merged and form a chain in $\mathcal{FC}$.  Note that, since $l\ge3$, there must be at least two chains in $\C(t_0)$, which assures the existence of at least two singleton 1-chains. Let us assume all agents do not have the same direction agreement in $\C(t_0+B+1)$. This implies there are two locally directed adjacent singleton 1-chains, say $Ch_1$ and $Ch_2$, whose clockwise direction faces each other. Now, the adversary can stop them from merging if both of these chains are forced not to move for at least $n$ rounds each, as the distance between them can be at most $n$. But, if $Ch_1$ is stopped for $n$ rounds, $Ch_2$ moves for those $n$ rounds, and vice versa. So, $Ch_1$ and $Ch_2$ merge and form one chain in $\mathcal{FC}$. So, it is evident that, if all agents do not have the same clockwise direction in $\C(t_0+B+1)$, then in $\C(t_0+2n+B+1)$ at least one chain from the class $\mathcal{FC}$ exists. Otherwise, if all agents have the same direction in $\C(t_0+B+1)$ then the adversary can create both the following configurations as $\C(t_0+2n+B+1)$. Either, all chains are still singleton 1-chain or, there exists at least one chain from $\mathcal{FC}$. So, if in $\C(t_0+2n+B+1)$ all chains are still singleton 1-chain, we can conclude that agents have achieved chirality or, in $\C(t_0+2n+B+1)$, there must exists a chain in $\mathcal{FC}$ and agents change state to \In~ in the next round (According to Procedure~\ref{algo:roundTheRing-1}). This concludes the proof.
\end{proof}

\begin{corollary}\label{cor: preround to phi 0}
    Let $\C(t_0)$ be the first configuration where all agents are in state \preRound. Then in $\C(t_0)$, $\Phi(t_0)=\frac{l}{2}$ and  within $O(ln)$ rounds, there exists a round $t_f$ such that exactly one of the following is true.
    \begin{enumerate}
        \item all agents achieves chirality in $\C(t_f)$
        \item In $\C(t_f)$, $\Phi(t_f)=0$ and all agents are in state \In.
    \end{enumerate}
\end{corollary}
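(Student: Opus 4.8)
The plan is to read off $\Phi(t_0)$ directly from the structural meaning of state \preRound, and then to route the trichotomy of Lemma~\ref{lemma:preround to certain scenarios} into the termination guarantees already proved for states \MergeI\ and \mix.

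First I would settle the value of $\Phi(t_0)$. Agents enter \preRound\ (via Case-II of Procedure~\textsc{PreProcess}) exactly when every chain of the configuration is a singleton $1$-chain. Such a chain holds precisely two agents on two adjacent singleton nodes, and because both of its terminals are singletons it is visibly undirected by Definition~\ref{def:visible direction}. Consequently $\mathcal{Z}_0(t_0)=0$ and all $l/2$ chains are counted by $\mathcal{Z}_1(t_0)$, giving $\Phi(t_0)=\mathcal{Z}_0(t_0)+\mathcal{Z}_1(t_0)=\frac{l}{2}$; note that being in \preRound\ already forces $l$ to be even, so $\frac{l}{2}$ is well defined.

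Next I would apply Lemma~\ref{lemma:preround to certain scenarios} and dispatch its three cases. If its Case~2 holds, chirality is reached at round $t_0+2n+B+1$ and I take $t_f$ to be that round, giving option~1. In its Case~1, the agents reach state \In\ at some $t_1<t_0+2n+B+1$ and then enter \Osc, \MergeI, or \mix\ within $O(1)$ rounds. The key observation here is that \textsc{Guider}\ directs an \In-configuration into \Osc\ (or \MergeII) only when every chain lies in $\mathcal{FC}$, that is, only when $\Phi=0$; hence in the \Osc\ sub-case the \In-round itself already witnesses $\Phi=0$ and I set $t_f$ to that round. In the \MergeI\ sub-case I invoke Corollary~\ref{Cor: Phi(t) becomes 0}, and in the \mix\ sub-case Lemma~\ref{lemma: 0- merge-1 to FC}; each produces, within $O(ln)$ further rounds, a round where either chirality holds or $\Phi=0$ with all agents in \In.

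The third case of the lemma---only guaranteeing one chain in $\mathcal{FC}$ at $t_0+2n+B+1$---is where I expect the bookkeeping to be most delicate, and is handled similarly. By Procedure~\textsc{RoundTheRing-I}\ the agents switch to \In\ in the next round; \textsc{Guider}, followed by \textsc{PreProcess}\ if it descends to \PreMergeI, then sends them to \MergeI, since the surviving $\mathcal{FC}$ chain excludes the all-singleton branches that would loop back to \preRound. From \MergeI, Corollary~\ref{Cor: Phi(t) becomes 0} closes the argument. The chief obstacle is exactly this last point: I must verify that no sequence of merges can return the agents to an all-singleton-$1$-chain configuration (which would re-enter \preRound\ and risk non-termination), and this rests on the persistence of the $\mathcal{FC}$ chain supplied by the lemma. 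Finally, tallying rounds---$O(n)$ to exhaust the $2n+B+2$ rounds of \preRound\ (using $B=O(\log n)$) plus the $O(ln)$ from Corollary~\ref{Cor: Phi(t) becomes 0} or Lemma~\ref{lemma: 0- merge-1 to FC}---keeps the total within $O(ln)$, while mutual exclusivity of the two options is immediate because achieving chirality terminates the algorithm.
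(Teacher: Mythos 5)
Your proposal is correct and follows essentially the same route as the paper's own proof: compute $\Phi(t_0)=\frac{l}{2}$ from the fact that every chain is a singleton $1$-chain with two agents, invoke Lemma~\ref{lemma:preround to certain scenarios}, and discharge each of its three cases via the observation that entering \Osc\ or \MergeII\ already certifies $\Phi=0$, via Corollary~\ref{Cor: Phi(t) becomes 0} for the \MergeI\ sub-cases, and via Lemma~\ref{lemma: 0- merge-1 to FC} for the \mix\ sub-case. Your added remarks (that $l$ must be even in this state, and that one must rule out a return to an all-singleton-$1$-chain configuration) are sensible refinements but do not change the argument.
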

\begin{proof}
    Let $\C(t_0)$ be the first configuration when all agents are in state \preRound. This implies that, in $\C(t_0)$, all chains are singleton 1-chains. Thus, each chain is from the class $\mathcal{NFC}$  and contains exactly two agents. So $\Phi(t_0)=\frac{l}{2}$. Now, from Lemma~\ref{lemma:preround to certain scenarios}, one of the following three scenarios can occur. 
    \begin{enumerate}
    \item there is a $t_1<t_0+2n+B+1$ such that in $\C(t_1)$ all agents change state to \In~ and in further constant rounds to one of the following states \mix~or \MergeI, or \Osc
        \item  in $\C(t_0+2n+B+1)$, all agents achieve chirality.
        \item in $\C(t_0+2n+B+1)$, there exists a chain from the class $\mathcal{FC}$ and agents change state to \In.
    \end{enumerate}

    Now, for case 2, we are already done. For Case 1, at $\C(t_1)$ agents are in state \In, $t_1<t_0+2n+B+1$. Now, if in $\C(t_1)$ there are no chains in $\mathcal{NFC}$. in this case agents change state to \Osc in round $t_1$ from \In. So, taking $t_f=t_1$, we are done for this case. So let there be chains from class $\mathcal{NFC}$ in $\C(t_1)$. So, within $O(1)$ further rounds, there is a round $t_2$ such that at $t_2$, the agents change state to either \mix~or \MergeI.
    Now if at $\C(t_2)$ agents are at state \MergeI, then by Corollary~\ref{Cor: Phi(t) becomes 0} within $O(ln)$ further rounds there exists a round $t_f$ such that either chirality is achieved in $\C(t_f)$, or $\Phi(t_f)=0$ with each agents being in state \In. Similarly, we can say the same for the case where in $\C(t_2)$ all agents change to state \mix~ from Lemma~\ref{lemma: 0- merge-1 to FC}. 
    
    For Case 3, at $\C(t_0+2n+B+1)$, there exists a chain from the class $\mathcal{FC}$. This implies that, in the next configurations all agents will have state \In (according to Procedure~\ref{algo:roundTheRing-1}) and in constant further rounds there is a round $t_3$ such that in $\C(t_3)$ agents either change to state \MergeI, or \Osc~or \MergeII. 
    
    Now, if agents change state to either \Osc, or \MergeII~ at $t_3$ then in $\C(t_3)$, $\Phi(t_3)=0$ and all agents are in state \In. So, taking $t_f=t_3$ is sufficient for this case.

    Now for the case where in $t_3$ agents change state to \MergeI, by Corollary~\ref{Cor: Phi(t) becomes 0}, within further $O(ln)$ rounds there exists a round $t_4$ such that either agents achieve chirality at $t_4$ or, in $\C(t_4)$, $\Phi(t_4)=0$ and all agents are in state \In. Now, taking $t_f=t_4$ suffices for this case.
 \end{proof}

 \begin{lemma}\label{lemma: round to certain cases}
     Let $\C(t_0)$ be the first configuration when all agents are in state \round, for some $t_0>0$. Then, exactly one of the following statements is true.
     \begin{enumerate}
         \item $\exists~ t_1 \le t_0+n$ such that at $\C(t_1)$ all agents achieve chirality.
         \item $ \exists ~t_1 \approx t_0+n+o(1)$ such that at $\C(t_1)$, all agents change state to exactly one of the following.
         \preRound, \MergeI, \Osc, or, \MergeII
     \end{enumerate}
 \end{lemma}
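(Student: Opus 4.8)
The plan is to read off the dichotomy directly from the behaviour of Procedure~\ref{algo:roundTheRing-0}, which runs for exactly $n$ rounds before it either terminates with a chirality agreement or hands control back to state \In. First I would record the entry invariant: since state \round\ is reached only through Case-I of Procedure~\ref{algo:preProcess}, in $\C(t_0)$ every chain is a singleton $0$-chain (all occupied nodes are singletons separated by at least one empty node), and the configuration is neither global nor asymmetric, since otherwise chirality would already have been declared at the top of \textsc{Achiral-2-Chiral}. During rounds $t_0,\dots,t_0+n-1$ each agent simply steps in its own $CW$ sense. I would then split on one clean event: whether at some round in $[t_0,t_0+n]$ the configuration becomes asymmetric or global. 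If it does, the agents align with that direction and we are in outcome~1 with $t_1\le t_0+n$. The substance is therefore the complementary case, where the adversary keeps the configuration non-global and non-asymmetric throughout.

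In that complementary case I would branch on whether all agents share a common $CW$ direction. If they do, then in the absence of any blocking every agent performs $n$ unobstructed clockwise steps, returns to its starting node, and the configuration at round $t_0+n$ is again all singleton $0$-chains; the round-$n$ test of Procedure~\ref{algo:roundTheRing-0} succeeds and the agents align their (now agreed) clockwise sense and terminate, giving outcome~1 at exactly $t_1=t_0+n$. If instead the agents do \emph{not} all agree, I would argue, exactly as in the merging argument of Lemma~\ref{lemma:preround to certain scenarios}, that there is an adjacent pair of $0$-chains whose clockwise senses point towards each other; since the adversary can stall at most one of the two per round and their mutual distance is at most $n$, they are forced to meet within $n$ rounds and form a chain that is not a singleton $0$-chain. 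Hence the round-$n$ test fails, the agents change state to \In, and outcome~2 follows.

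From state \In\ the rest is routine: Procedure~\ref{algo:Guider} (invoked the next round) inspects the residual chains and, in a constant number of rounds through Procedure~\ref{algo:preProcess}, drives every agent into exactly one of \preRound, \MergeI, \Osc, or \MergeII\ (the four cases enumerated after Procedure~\ref{algo:preProcess}), which is why outcome~2 is stated as $t_1\approx t_0+n+o(1)$. Mutual exclusivity is then immediate, because the branch taken at round $t_0+n$ is a deterministic test (all chains singleton $0$-chains, or not); the two cases above show that this test returns affirmatively precisely when the agents genuinely share a direction, so the chirality declared in outcome~1 is always correct and never coexists with the state change of outcome~2.

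The delicate point --- and the step I expect to be the main obstacle --- is justifying that disagreement forces a non-singleton-$0$-chain to \emph{persist} at round $t_0+n$, rather than a transient meeting that dissolves again. This needs a careful treatment of how two oppositely oriented agents interact (a node collision creating a momentary multiplicity versus two agents exchanging nodes across an edge) together with the adversary's freedom to delete one edge per round: I would show that the adversary can neither dissolve such a meeting back into separated singletons before round $n$, nor, in the common-direction subcase, block the uniform flow without either creating a persistent pile-up (again failing the round-$n$ test, hence outcome~2) or producing an asymmetric configuration (hence outcome~1). Pinning down these movement primitives, and confirming that the $n$-round budget is exactly what a facing pair at distance up to $n$ requires to be forced together despite one blocked edge per round, is where the real work lies.
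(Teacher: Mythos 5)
Your overall structure mirrors the paper's proof: the same entry invariant (all chains are singleton $0$-chains and the configuration is neither global nor asymmetric), and the same trichotomy --- adversarial blocking of a singleton $0$-chain forces asymmetry and hence chirality; unanimous clockwise sense leaves all singleton $0$-chains at round $t_0+n$ and yields chirality; disagreement forces a merge, a transition to \In, and then to one of the listed states within $O(1)$ further rounds. However, your final routing step contains a concrete error. You justify outcome~2 by appealing to ``the four cases enumerated after Procedure~\ref{algo:preProcess}'', but those cases lead to \round, \preRound, \mix, and \MergeI\ --- not to the set $\{\preRound, \MergeI, \Osc, \MergeII\}$ asserted in the lemma. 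To obtain exactly that set you need two observations that the paper makes and you omit: (i) Procedure~\ref{algo:Guider} sends agents \emph{directly} to \Osc\ or \MergeII\ when every residual chain is already in $\mathcal{FC}$, bypassing \PreMergeI\ entirely (this is the only way \Osc\ and \MergeII\ arise here); and (ii) when \PreMergeI\ is entered, \round\ is excluded because under disagreement not all chains can still be singleton $0$-chains at round $t_0+n$, and \mix\ is excluded because after the first two rounds of Procedure~\ref{algo:preProcess} no $0$-chain remains at all. Without these, the ``exactly one of'' claim is unsupported.

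The ``delicate point'' you flag but defer --- that a meeting might be transient, or that oppositely oriented agents might swap --- is the other place your proposal stops short of a proof. The paper closes it not by a local collision analysis but by a counting argument: a singleton $0$-chain surviving at round $t_0+n$ would have had to complete $n$ clockwise moves without ever encountering another occupied node, which cannot happen once two agents disagree on clockwise (two facing singleton $0$-chains are separated by at least one empty node, so they either collide on an intermediate node or close into a $1$-chain, and any edge removal incident to a singleton $0$-chain makes the configuration asymmetric, returning you to outcome~1). Your instinct that a swap cannot occur between $0$-chains is correct, but you should replace ``this is where the real work lies'' with that argument; as written, the central step of the disagreement branch is a plan rather than a proof.
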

 \begin{proof}
     Let all agents at $\C(t_0)$ be in state \round~ for the first time. Then, all chains in $\C(t_0)$ are singleton 0-chains. Thus $\Phi(t_0)=l$. In this case, the agents start executing Procedure \ref{algo:roundTheRing-0} for $n$ consecutive rounds. In this procedure, an agent moves along its clockwise direction for $n$ consecutive rounds unless within these $n$ rounds it forms either a 0-chain with multiplicity, or an $i-$chain, where $i\ge 1$. Note that, during this move, the adversary can not stop the movement of an agent by removing an edge and without making the configuration asymmetric. And, agents can achieve chirality from an asymmetric configuration So, if the adversary stops movement by any agent at a round $t_1< t_0+n$, agents can achieve chirality in configuration $\C(t_1)$. Now, let the adversary not try to interrupt any agent by removing any edge. In this case, if all agents in $\C(t_0)$ have the same clockwise direction. Then, at $\C(t_0+n)$, all agents will remain part of a singleton 0-chain. So, if this is the case, then at $\C(t_0+n)$ agents can agree upon the chirality. Now, if at $\C(t_0)$, agents do not have the same clockwise direction, then there exist two adjacent agents and they move towards each other. Now, they either form a 0-chain with multiplicity or one 1-chain. Now, since $l\ge 3$, other agents may also merge in this chain and form a 0-chain with multiplicity or,  visibly directed  1-chain, or, visibly undirected 1-chain, or, $i-$chain where $i\ge 2$. Note that, no singleton 0-chain would remain, as it would imply that a singleton 0-chain from $\C(t_0)$ completed $n$ clockwise moves without meeting any other, which is not possible as agents have different clockwise directions. For this case, at $\C(t_0+n)$ agents change state to \In.

     Now, in the next round (i.e., $t_0+n+1$), if all chains are visibly directed, then either the configuration is global (in that case chirality is achieved), or they change state to \MergeII. 

     Now, if all chains in round $t_0+n+1$ are not visible directed, but all chains are from the class $\mathcal{FC}$, then either the configuration is global for which all agents achieve chirality or, they change state to \Osc.  For the above two cases, taking $t_1=t_0+n+1$ is sufficient for the proof.

     On the other hand, if at $t_0+n+1$, all chains are not from class $\mathcal{FC}$, then either they achieve chirality (if the configuration is global or asymmetric) or, they change to state \PreMergeI. If agents change state to  \PreMergeI, then agents execute Procedure~\ref{algo:preProcess} for 3 consecutive rounds. In the first two rounds, all 0-chains with multiplicity become either a visibly directed 1-chain or, singleton 1-chain, and all 1-chains with multiplicity on both vertices become $i-$chain where $i\ge 2$. So, at the third round of executing Procedure~\ref{algo:preProcess} (i.e., at round $t_0+n+4$), if all chains are visibly directed then either chirality is achieved, or agents change state to \MergeII. On the other hand, if all chains are not visibly directed, but all are from $\mathcal{FC}$, then agents either achieve chirality (if the configuration is global), or they change state to \Osc.
     Now there is further one case that can occur at round $t_0+n+4$, that is at this round, there are chains from the class $\mathcal{NFC}$, in this case, either agents achieve chirality (if configuration is global or, asymmetric)or,  agents changes to states either \preRound~ or, \MergeI. (state \mix~ does not occur as there can not be any 0-chain at $t_0+n+1$ as described above). So, for the above cases, taking $t_1=t_0+n+4$ is sufficient. 
 \end{proof}

 \begin{corollary}\label{cor: roundtozero}
     Let $\C(t_0)$ be the first configuration when all agents are in state \round, for some $t_0>0$. Then,
   $\Phi(t_0)=l$ and there exists a round $t_f\le t_0+O(ln)$ such that in $\C(t_f)$ exactly one of the following is true
     \begin{enumerate}
         \item all agents achieve chirality.
         \item  $\Phi(t_f)=0$ and all agents are in state \In. 
     \end{enumerate}
 \end{corollary}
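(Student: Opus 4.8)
The plan is to first pin down $\Phi(t_0)$ and then funnel every possibility allowed by Lemma~\ref{lemma: round to certain cases} into one of the two stated outcomes by relaying through the corollaries already proved for the destination states. Since $\C(t_0)$ is the \emph{first} configuration in which all agents are in state \round, every chain is a singleton $0$-chain; hence $\mathcal{Z}_1(t_0)=0$, and each of the $l$ agents occupies its own $0$-chain, so $\mathcal{Z}_0(t_0)=l$. Therefore $\Phi(t_0)=\mathcal{Z}_0(t_0)+\mathcal{Z}_1(t_0)=l$, which settles the first assertion.

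For the existence of $t_f$, I would apply Lemma~\ref{lemma: round to certain cases}, which leaves exactly two cases. In the first case chirality is already reached at some $t_1\le t_0+n$; taking $t_f=t_1$ yields outcome~1 immediately. In the second case there is a round $t_1=t_0+n+O(1)$ at which all agents simultaneously switch to exactly one of \preRound, \MergeI, \Osc, or \MergeII (state \mix\ is explicitly excluded by the lemma since no $0$-chain survives the clockwise sweep). I would then dispatch these four sub-cases separately. When the agents enter \MergeI\ at $t_1$, a chain of $\mathcal{NFC}$ must remain (that is precisely why \MergeI\ was entered), so $\Phi(t_1)>0$ and Corollary~\ref{Cor: Phi(t) becomes 0} furnishes a round within $O(ln)$ of $t_1$ at which either chirality holds or $\Phi=0$ with all agents in state \In; I set $t_f$ to that round. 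When the agents enter \preRound\ at $t_1$, the configuration consists solely of singleton $1$-chains, so $\C(t_1)$ qualifies as a first \preRound-configuration and Corollary~\ref{cor: preround to phi 0} delivers, within a further $O(ln)$ rounds, either chirality or the desired state-\In, $\Phi=0$ configuration.

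The remaining two destinations are handled directly: transitions to \Osc\ and \MergeII\ are only triggered from state \In\ when every chain lies in $\mathcal{FC}$, i.e.\ when $\Phi=0$; hence at the round the switch is decided the agents are still in state \In\ with $\Phi=0$, so $t_f=t_1$ realizes outcome~2 at once. Summing contributions, the passage from \round\ to the branching round costs $O(n)$, and the worst sub-case (\MergeI\ or \preRound) adds at most $O(ln)$, giving $t_f-t_0\le O(ln)$; outcomes~1 and~2 are mutually exclusive by construction. The only delicate point, which I would write out carefully, is the bookkeeping at the transition round $t_1$: I must confirm that ``changing state to \Osc\ or \MergeII'' is evaluated while the agent is in state \In\ with $\Phi=0$ (so outcome~2 holds at $t_1$ itself), and that re-entering \preRound\ at $t_1$ genuinely reproduces the hypotheses of Corollary~\ref{cor: preround to phi 0}. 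Everything else is a mechanical relay through the previously established corollaries.
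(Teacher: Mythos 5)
Your proposal is correct and follows essentially the same route as the paper: establish $\Phi(t_0)=l$ from the fact that all chains are singleton $0$-chains, invoke Lemma~\ref{lemma: round to certain cases} to branch on the destination state, handle \Osc~and \MergeII~directly (since those transitions occur from state \In~with $\Phi=0$), and relay \MergeI~and \preRound~through Corollary~\ref{Cor: Phi(t) becomes 0} and Corollary~\ref{cor: preround to phi 0} respectively for the $O(ln)$ bound. The one point you flag as delicate---whether re-entering \preRound~satisfies the ``first configuration'' hypothesis of Corollary~\ref{cor: preround to phi 0}---is glossed over in the paper as well, so your treatment is no weaker than the original.
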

\begin{proof}
    From Lemma~\ref{lemma: round to certain cases} it is clear that at $\C(t_0)$, $\Phi(t_0)=l$. Also from the same lemma we can conclude that $\exists~ t_1 < t_0+O(n)$,  such that at $t_1$ either chirality is achieved or, at $t_1$ agents are in one of the following states \preRound, \MergeI, \MergeII, or, \Osc.

    If at $t_1$ agents change state to \Osc, or \MergeII then we take $t_f=t_1$. In this case, $\Phi(t_f)=0$. And, in $\C(t_f)$, all agents are in state \In. On the other hand, if at $t_1$ all agents are in state either \preRound, or \MergeI, then by Corollary~\ref{Cor: Phi(t) becomes 0}, and Corollary~\ref{cor: preround to phi 0}, within further $O(ln)$ rounds there will be a round $t_2$ such that at $t_2$ either agents achieve chirality or, in $\C(t_2)$, $\Phi(t_2)=0$ with all agents in state \In. So, for this case we take $t_f=t_2$, which proves the result.
\end{proof}

\begin{theorem}\label{Thm:Phi0}
For any 1-interval connected ring with $l\ge 3$ agents executing \textsc{Achiral-2-Chiral}, there exists a round $t_f< O(ln) $ such that in $\C(t_f)$, either chirality is achieved, or, $\Phi(t_f)=0$ where all agents are in state \In.
\end{theorem}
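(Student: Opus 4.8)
The plan is to treat this statement as the capstone that threads together the four state-specific results already established, using $\Phi$ as the governing potential. I start from $\C(0)$, where by hypothesis every agent is in state \In, and first dispose of the adversary: since \textsc{Achiral-2-Chiral} tests at the top of every round whether the current configuration is asymmetric or global, the moment the adversary's edge removal (or the agents' own motion) produces such a configuration, chirality is achieved and the conclusion holds with that round as $t_f$. Hence I may assume throughout that the configuration is never global or asymmetric, which is exactly the regime in which the state-based dispatch runs.

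The first step is to read off the dichotomy that Procedure~\textsc{Guider} (Procedure~\ref{algo:Guider}) imposes on the \In\ round. I would verify, branch by branch, that \textsc{Guider} routes the agents to \MergeII\ or \Osc\ precisely when no chain lies in $\mathcal{NFC}$, i.e.\ when $\Phi(0)=0$, and to \PreMergeI\ precisely when some $0$-chain or visibly undirected $1$-chain exists, i.e.\ when $\Phi(0)>0$. The two observations that make this clean are that an $i$-chain with $i\ge 2$ is always in $\mathcal{FC}$, and that when there are no visibly directed chains every $1$-chain is undirected and hence in $\mathcal{NFC}$. In the first case the agents are in state \In\ with $\Phi=0$ already at round $0$, so $t_f=0$ settles the theorem immediately.

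It remains to handle $\Phi(0)>0$, where the agents enter \PreMergeI\ and execute Procedure~\textsc{PreProcess} (Procedure~\ref{algo:preProcess}). After its first two rounds eliminate every $0$-chain carrying a multiplicity and every undirected $1$-chain both of whose endpoints are multiplicities, the exhaustive case test of its third round (Cases I--IV) transitions the agents, within $O(1)$ rounds, into exactly one of the four states \round, \preRound, \mix, or \MergeI. From here I invoke the matching result as a black box: Corollary~\ref{cor: roundtozero} for \round, Corollary~\ref{cor: preround to phi 0} for \preRound, Lemma~\ref{lemma: 0- merge-1 to FC} for \mix, and Corollary~\ref{Cor: Phi(t) becomes 0} for \MergeI. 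Each guarantees that within $O(ln)$ further rounds either chirality is reached or a configuration with $\Phi=0$ and all agents in state \In\ appears. Taking $t_f$ to be the round supplied by whichever branch occurs, and adding the $O(1)$ overhead of \textsc{Guider} and \textsc{PreProcess}, yields $t_f<O(ln)$.

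The only real obstacle is the bookkeeping that keeps the composition from looping or inflating the round count, so I would make the acyclicity of the dependency graph explicit: \MergeI\ (Corollary~\ref{Cor: Phi(t) becomes 0}) rests only on Lemma~\ref{lemma:merge1 decrease unwanted chains}; \mix\ reduces to \MergeI; \preRound\ reduces to \MergeI\ and \mix\ (never to \round); and \round\ reduces to \MergeI\ and \preRound. Consequently no branch cascades back into \round\ or \preRound\ so as to restart the whole process, and each of the four results already internalises the adversarial edge-removal analysis, so the individual $O(ln)$ budgets do not compound. The single point that genuinely must be checked rather than assumed is that \textsc{PreProcess} establishes the preconditions the four results rely on—in particular that on first entering \round\ every chain is a singleton $0$-chain, giving $\Phi=l$, and analogously $\Phi=\tfrac{l}{2}$ on first entering \preRound—which is exactly what the two preparatory rounds and the third-round case distinction of \textsc{PreProcess} secure.
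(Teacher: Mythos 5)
Your proposal is correct and follows essentially the same route as the paper's proof: dispose of the global/asymmetric cases, use \textsc{Guider} to split on $\Phi(0)=0$ versus $\Phi(0)>0$, run \textsc{PreProcess} to land in one of \round, \preRound, \mix, or \MergeI, and then invoke Corollary~\ref{cor: roundtozero}, Corollary~\ref{cor: preround to phi 0}, Lemma~\ref{lemma: 0- merge-1 to FC}, and Corollary~\ref{Cor: Phi(t) becomes 0} respectively. Your explicit check that the dependency graph among these four results is acyclic (so the $O(ln)$ budgets do not compound) is a point the paper leaves implicit, but it does not change the argument.
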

 \begin{proof}
     Let $\C(0)$ be any initial configuration with number of agents $l\ge 3$. Now in $\C(0)$ if all chains are in $\mathcal{FC}$ then $t_f=0$, and we are done. So, let there exist chains from the class $\mathcal{NFC}$ in $\C(0)$. So, when agents execute Algorithm~\ref{algo:Main}, they change state to state \PreMergeI. In state \PreMergeI, agents execute Procedure~\ref{algo:preProcess}. The agents execute this process for three consecutive rounds. In the first two rounds, if the configuration is not global or asymmetric, then all 0-chains with multiplicity become either a visibly directed 1-chain or a singleton 1-chain. And all 1-chain, with multiplicity on both vertices, becomes $i-$chain, where $i\ge 2$. Otherwise, chirality will be achieved, and we are done. So, let the configuration not be global or asymmetric in the first two rounds of executing Procedure~\ref{algo:preProcess}. Thus, at the beginning of the third round of executing Procedure~\ref{algo:preProcess}, say $t_1$, there remains no 0-chain with multiplicity or, 1-chain that has multiplicity on both the vertices. So, in $\C(t_1)$ the chains can be either a singleton 0-chain, a singleton 1-chain, or any other chains from the class $\mathcal{FC}$. Now $\C(t_1)$ can be exactly one of the following configurations,
     \begin{itemize}
         \item All chains are singleton 0-chain in $\C(t_1)$. For this case, agents change state to state \round.
         \item All chains are singleton 1-chain in $\C(t_1)$. For this state, agents change state to \preRound.
         \item $\exists$ a singleton 0-chain and a singleton 1-chain and $\nexists$ any chain from $\mathcal{FC}$. For this case, agents change state to \mix.
         \item $\exists$ a chain from $\mathcal{FC}$. For this case, agents change state to \MergeI.
     \end{itemize}
     So, at $t_1+1$, agents can be at any one of the following states, \round, \preRound, \mix, \MergeI. 
     Now, by Corollary~\ref{Cor: Phi(t) becomes 0}, Lemma~\ref{lemma: 0- merge-1 to FC}, Corollary~\ref{cor: preround to phi 0} and Corollary~\ref{cor: roundtozero}, Within further $O(ln)$ rounds from $t_1$,  there will be a round $t_f$ such that at $t_f$ chirality is achieved by all the agents or, $\Phi(t_f)=0$.
  \end{proof}

 \begin{theorem}
     \label{thm: all visibly directed}
     Let $\C(t_0)$ be a configuration for some $t_0\ge0$ such that in $\C(t_0)$, all agents are in state \In~ and $\Phi(t_0)=0$. Let $\C(t_0)$ has $p$ chains such that not all of them are visibly directed. Then, $\exists~ t_f \ge  t_0$ such that in $\C(t_f)$ the agents either achieve chirality or, in $\C(t_f)$, all of the $p$ chains become visibly directed and in state \In. 
 \end{theorem}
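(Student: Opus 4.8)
The plan is to trace the control flow of \textsc{Achiral-2-Chiral} from $\C(t_0)$ and show it funnels every agent into state \Osc, where Procedure~\ref{algo:GetDirected} (\textsc{GetDirected}) deterministically orients all $p$ chains within $2B+2$ rounds, unless the adversary forces an asymmetric or global snapshot along the way (which itself yields chirality).

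\emph{Routing to \Osc.} First I would argue that from $\C(t_0)$, with all agents in state \In~and $\Phi(t_0)=0$, the Guider routine (Procedure~\ref{algo:Guider}) sends every agent to \Osc. Since $\Phi(t_0)=0$ there are neither $0$-chains nor visibly undirected $1$-chains, so every chain lies in $\mathcal{FC}$. If at least one chain is visibly directed, Guider takes the ``all chains are in $\mathcal{FC}$'' branch, and because by hypothesis not all chains are visibly directed, it resets $round=osc\_wait=ret=0$ and changes the state to \Osc. If no chain is visibly directed, then no $1$-chain can exist at all (an undirected one would contradict $\Phi(t_0)=0$, and a directed one would be a visibly directed chain), so there is no $i$-chain with $i\le 1$, and that branch of Guider again routes to \Osc~with the counters reset.

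\emph{The bit tournament.} Next I would analyse Procedure~\ref{algo:GetDirected}. In round $0$, each terminal multiplicity of an undirected $i$-chain ($i\ge 2$) is thinned to a singleton by evicting all but its lowest-ID agent inward, so afterwards both terminals of every still-undirected chain are singletons. For $1\le round\le 2B$ the agents run a bit-by-bit tournament: at an inward round a terminal agent steps one node into the chain exactly when the current bit of its ID is $1$, and at the following retreat round it latches $osc\_wait=1$ if the chain has become visibly directed and otherwise steps back out. The crucial observation is that the two terminal agents $r_1,r_2$ of a chain carry distinct $B$-bit IDs, so there is a least index $x$ with $ID_{x+1}(r_1)\ne ID_{x+1}(r_2)$; at that bit exactly one of them moves inward, converting its terminal into a multiplicity while the other terminal stays a singleton, which is precisely the visible-direction condition of Definition~\ref{def:visible direction}. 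Once directed, $osc\_wait$ freezes that chain. I would then establish the invariant that for every bit strictly below $x$ the two agents act identically (both stay, or both step in and both retreat), restoring the chain to its round-$0$ shape and keeping it in $\mathcal{FC}$; this bookkeeping needs care for short chains, where a simultaneous inward move momentarily collapses the chain, and I would check that the retreat step repairs it before the next bit is examined. Since every terminal pair differs at some $x<B$, by the end of round $2B+1$ all $p$ chains are visibly directed, and at $round>2B$ every agent returns to state \In, delivering the second alternative, with the chain count unchanged because decisive inward moves only widen the gaps to neighbouring chains.

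\emph{The adversary.} The remaining, and I expect hardest, part is to rule out the adversary stalling this process. The adversary deletes one edge per round, and the only way this touches \textsc{GetDirected} is by blocking an inward or retreat move. Here I would invoke the wrapper of \textsc{Achiral-2-Chiral}, which tests the snapshot for asymmetry and globalness before any state-based action, either of which immediately grants chirality (the first alternative). I would argue that blocking the decisive inward move of a chain places the missing edge off-centre within that chain's extended arc, making the configuration asymmetric, so the adversary cannot keep a chain undirected without conceding chirality; and since it controls only one edge per round, it cannot interfere with the tournaments of several chains simultaneously. The principal difficulty is exactly this reconciliation: formalising that every \emph{effective} interference (one that actually alters the outcome of a move) forces an asymmetric or global snapshot, while every \emph{ineffective} interference leaves the deterministic tournament on track. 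Combining the two alternatives then yields the round $t_f$ asserted by the theorem.
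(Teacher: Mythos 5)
Your proposal follows essentially the same route as the paper's own proof: route all agents to state \Osc\ via \textsc{Guider}, run the bit-by-bit ID tournament in \textsc{GetDirected} so that the first differing bit of the two terminal agents makes each chain visibly directed within $2B+2$ rounds, and dispose of adversarial edge removals by observing that blocking a terminal agent's move forces an asymmetric configuration and hence immediate chirality. The extra care you flag (the below-$x$ invariant, short chains, simultaneous tournaments) is a refinement of, not a departure from, the paper's argument.
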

 \begin{proof}
     Let $\C(t_0)$ be a configuration where all agents are in state \In~ and $\Phi(t_0)=0$. The existence of such a round is ensured by Theorem~\ref{Thm:Phi0}. Now if the configuration is global or, asymmetric agents achieve chirality at $t_0$. So, taking $t_f=t_0$ would suffice. Now let $\C(t_0)$ be neither global nor asymmetric. Now, since there are chains, not visibly directed, the agents change state to \Osc~ at round $t_0$. Now, from round $t_0+1$ to round $t_0+2B+2$, all agents execute Procedure~\ref{algo:GetDirected}. Let $C_D(u,v)$ be a visibly undirected chain in $\C(t_0+1)$ and at least one of $u$ and $v$ is of multiplicity. Note that $C_D(u,v) \in \mathcal{FC}$. So, by definition, $d_{\D}(u,v) \ge 2$. Now, in the first execution of procedure~\ref{algo:GetDirected}, only the lowest ID agents stay at $u$ and $v$. All other agents of $u$ and $v$ (if they exist) move inside the chain, so that both $u$ and $v$ become singletons. Let $r_u$ be the agent at $u$ and $r_v$ be the agent at $v$ in $\C(t_0+2)$. Now for the next $2B$ rounds starting from $t_0+2$, for round $t_0+2x+2~ (0\le x< B)$, both $r_u$ and $r_v$ check their $(x+1)$-th bit from right and if the bit is 1, they move inwards the chain. Otherwise, they do not move. Now, for each round $t_0+2x+3 ~(0\le x< B)$, if any agent among $r_u$ and $r_v$ moved in round $t_0+2x+2$, it moves back. Note that, since the chains are $i$-chains with $i\ge 2$, halting the agents at nodes $u$ and $v$, to move inward and then outward, effectively removing an edge, results in an asymmetric configuration, which ensures that the agents achieve chirality. So, let us assume the agents do not interrupt $r_u$ and $ r_v$'s movement. Also note that the movement by $r_u$ and $r_v$ ensures that no chain breaks into multiple chains or no two or more chains merge during this procedure. So the total number of chains remains the same in each configuration from round $t_0$ to $t_0+2B+2$. In that case, since $ID(r_u)\ne ID(r_v)$, there exists $y\le B$ such that the $y$-th bit from the right is the first bit that is different for $r_u$ and $r_v$. So, at round $t_0+2y$, only one of $r_u$ and $r_v$ moves inward the chain. So, in $\C(t_0+2y+1)$ the chain $C_{\D}(u,v)$ becomes visibly directed. From $t_0+2y+1$ to $t_0+2B+1$, agents in the visibly directed chain do not do anything. The above argument is true for any visibly undirected chain in $\C(t_0+1)$. So at $\C(t_0+2B+2)$ all chains become visibly directed. Further, all agents change state to \In. This concludes the proof.
 \end{proof}

 \begin{theorem}
     \label{thm: chain decrease}
Let $\C(t_0)$ be a configuration where all agents are in the state \In~ and all chains are visibly directed. If the configation $\C(t_0)$ contains $p >1$ such chains, then within $O(n)$ rounds, there exists a round $t_f$, at which the configuration $\C(t_f)$ satisfies one of the following:  
\begin{itemize}
    \item all agents agree on a common chirality, or
    \item all agents remain in the state \In, all chains are visibly directed, and the number of chains has reduced to $p' < p$.
\end{itemize}
\end{theorem}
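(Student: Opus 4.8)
The plan is to follow the state transitions dictated by \textsc{Guider()}, analyse the \MergeII~phase as the engine of chain reduction, and then invoke Theorem~\ref{thm: all visibly directed} to re-establish visible direction. First I would dispose of the degenerate cases: if $\C(t_0)$ is global or asymmetric, the agents already agree on a common clockwise direction in round $t_0$, so I take $t_f=t_0$ and land in the first alternative. Hence assume $\C(t_0)$ is neither global nor asymmetric. Since every one of the $p$ chains is visibly directed and the configuration is not global, the number $p_1$ of chains directed along $\D$ equals the number $p_2$ directed along $\D'$; because $p>1$ this forces $p_1=p_2=p/2\ge 1$, so both directions occur. Reading the directions cyclically around the ring, a binary cyclic sequence containing both symbols must have a clockwise-consecutive pair whose earlier chain points clockwise and whose later chain points counter-clockwise, i.e.\ a pair of adjacent visibly directed chains facing each other. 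As all chains lie in $\mathcal{FC}$ and are visibly directed, \textsc{Guider()} moves every agent to state \MergeII.

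Next I would analyse Procedure~\ref{algo:Merge-II}. Fix a facing pair $Ch_1,Ch_2$ and let $g\le n$ be the number of empty nodes separating their confronting ends. Each round the agents of a facing chain translate one step toward their partner, except when the missing edge lies inside their own chain, in which case that chain waits. The crucial observation is that the adversary deletes at most one edge per round, whereas the two confronting frontier agents of $Ch_1$ and $Ch_2$ would traverse two \emph{distinct} edges whenever $g\ge 1$; consequently at least one of the two chains advances and $g$ strictly decreases every round, so the pair coalesces within at most $n$ rounds. Moreover any edge deletion that genuinely freezes a confronting chain places the missing edge centred inside an odd chain or off-centre in an extended arc, rendering the configuration symmetric or asymmetric—and in the asymmetric case chirality is achieved, returning us to the first alternative. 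Running this over all facing pairs, which occupy disjoint arcs, and noting that merging two facing chains produces a chain with both ends singleton (hence a visibly undirected $i$-chain with $i\ge 2$, which participates in no facing pair and therefore spawns no new merge), \MergeII~terminates by some round $t_1\le t_0+O(n)$ in which all agents return to state \In, the chain count has dropped to some $p''<p$ (it decreased by the number of merged facing pairs, at least one), and $\Phi(t_1)=0$, since the surviving chains are either untouched directed chains or newly formed undirected $i$-chains with $i\ge 2$, leaving no $0$-chains and no undirected $1$-chains.

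Finally I would re-establish visible direction. In $\C(t_1)$ the agents are in state \In~with $\Phi(t_1)=0$; if this configuration is global or asymmetric, chirality follows and I take $t_f=t_1$. Otherwise at least one merged chain is visibly undirected, so not all chains are visibly directed and Theorem~\ref{thm: all visibly directed} applies: within a further $O(\log n)$ rounds there is a round $t_f$ at which either all agents achieve chirality, or every chain is visibly directed with all agents in state \In, the \textsc{GetDirected()} phase leaving the chain count unchanged at $p''$. Setting $p'=p''<p$ yields the second alternative, and the whole process consumes $O(n)+O(\log n)=O(n)$ rounds.

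The main obstacle I expect is the middle step: rigorously showing that the one-edge-per-round adversary cannot indefinitely postpone a merge. The care lies in the case analysis for the location of the missing edge (inside $Ch_1$, inside $Ch_2$, or in the gap), in verifying that the confronting frontier agents always traverse distinct edges so that the separation $g$ is a strictly decreasing potential, and in checking that whenever the adversary does manage to halt a confronting chain it necessarily produces a symmetric or asymmetric configuration—the latter immediately delivering chirality and short-circuiting the argument. A secondary point requiring attention is confirming that merged chains are always visibly undirected and create no fresh facing pairs, which is what guarantees that a single pass of \MergeII~both reduces the count and terminates within the claimed $O(n)$ bound.
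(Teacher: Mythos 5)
Your proposal is correct and follows essentially the same route as the paper's proof: dispose of the global/asymmetric cases, observe that non-globality forces a facing pair of visibly directed chains, let \MergeII{} close the gap (the adversary's single missing edge can stall at most one of the two confronting chains per round), note that each merged chain has singleton endpoints and is therefore visibly undirected and spawns no new facing pair, and finish by invoking Theorem~\ref{thm: all visibly directed} to restore visible direction in $O(\log n)$ further rounds. Your explicit justification that a facing pair exists (via $p_1=p_2$ and the cyclic sequence of directions) and your distinct-edges argument for the strictly decreasing gap are slightly more detailed than the paper's corresponding assertions, but the decomposition and key steps are the same.
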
  
\begin{proof}
     Let $\C(t_0)$ be a configuration in which all agents are in state \In~ and all $p$ chains are visibly directed. If the configuration is global or asymmetric, then we are done, as chirality will be achieved by the agents. So, let the configuration $\C(t_0)$ be neither global nor asymmetric. So all agents change their state to \MergeII. This implies that among $p$ chains there exists at least a pair of adjacent chains whose direction is towards each other.  Let us arbitrarily choose one such pair, and name them $Ch_1=C_{\D}(u_1,v_1)$ and $Ch_2= C_{\D}(v_2,u_2)$. During the state \MergeII, agents execute Procedure~\ref{algo:Merge-II}. In this procedure, the chains $Ch_1$ and $Ch_2$ move towards each other by moving according to the direction of the chains if there is no missing edge in of $Ch_1$ or $Ch_2$. Without loss of generality, let $Ch_1$ has a missing edge. If the configuration is asymmetric due to the missing edge, then the agents achieve chirality, and we are done. On the other hand, if the configuration is symmetric, then agents in $Ch_1$ do not move, but agents in $Ch_2$ move towards $Ch_1$. So, the distance between $Ch_1$ and $Ch_2$ decreases at least by one and at most by two in each execution of Procedure~\ref{algo:Merge-II}. So, there exists a time $t_1$  such that both $Ch_1$ and $Ch_2$ merges and creates a new chain $Ch_3 \in \mathcal{FC}$ in the configuration $\C(t_1)$ in expense of two chains(as the chain length will be greater or equals to 2). Note that, $Ch_3$ will not be visibly directed as both the terminal nodes are singletons also as the initial distance between $Ch_1$ and $Ch_2$ can be at most $O(n)$, to $t_1-t_0\approx O(n)$. This will be true for all such pairs of visibly directed chains directed towards each other in $\C(t_0)$. Let there be $q$ such pairs. So, until all $q$ pairs are merged, the agents in the already merged chains do nothing. Now, there will be a time $t_2\ge t_1$ such that at $t_2$ there are $p'=p-q$ chains where $q>0$ and all chains are in $\mathcal{FC}$ but not all chains are visibly directed. Note that, for the same reason stated above, $t_2-t_0\approx O(n)$. In $\C(t_2)$, all agents change state from \MergeII~ to \In. Then at $t_2+1$, all agents are in state \In and $\Phi(t_2+1)=0$. Then by Theorem~\ref{thm: all visibly directed}, within $O(B)=O(\log n)$ rounds from $t_2$ there exists a round $t_3\ge t_2$ such that either agents achieve chirality in $\C(t_3)$ or, in $\C(t_3)$ all $p'=p-q<0$ chains becomes visibly directed along with all agents being in state \In. For this case, taking $t_f=t_3$ proves the result.
\end{proof}
\begin{theorem}
In any $1$-interval connected ring with $n$ nodes and $l$ achiral agents ($n \ge l \ge 3$), where each agent has $O(\log n)$ memory and the agents are initially placed arbitrarily on the nodes, chirality can be achieved in $O(ln)$ rounds by executing the \textsc{Achiral-2-Chiral} algorithm.
\end{theorem}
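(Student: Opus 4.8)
The plan is to chain together the three preceding results—Theorem~\ref{Thm:Phi0}, Theorem~\ref{thm: all visibly directed}, and Theorem~\ref{thm: chain decrease}—using the number of chains as a strictly decreasing monovariant, and then to sum the per-phase round complexities into the claimed $O(ln)$ bound. The crux is that these three theorems are designed to dovetail: the output hypothesis of one is exactly the input hypothesis of the next.

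First I would invoke Theorem~\ref{Thm:Phi0} to pass from the arbitrary initial configuration $\C(0)$ to a round $t_1 = O(ln)$ at which either chirality has already been achieved—in which case we are done—or $\Phi(t_1)=0$ with all agents in state \In. From this point every chain lies in $\mathcal{FC}$, so the remaining task is to drive the configuration down to a single visibly directed chain, which is necessarily global and hence yields chirality. The engine for this is an alternating loop between Stage~2 and Stage~3. Starting from a round with $\Phi=0$ and all agents in state \In, if not all chains are visibly directed I apply Theorem~\ref{thm: all visibly directed}, which within $O(B)=O(\log n)$ rounds either produces chirality or makes every chain visibly directed. If more than one chain then remains, Theorem~\ref{thm: chain decrease} applies: within $O(n)$ rounds it either yields chirality or strictly reduces the number of chains while returning the system to a state-\In, $\Phi=0$ configuration (the freshly merged chains being visibly undirected). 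This re-establishes the hypothesis of Theorem~\ref{thm: all visibly directed}, closing the loop.

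The termination argument rests on the observation that each full pass through the loop strictly decreases the number of chains $p$, and the process can only halt by achieving chirality or by reaching $p=1$; but a lone visibly directed chain has $p_1=1\ne 0=p_2$, so by Definition~\ref{def:global configuration} the configuration is global and chirality follows at once. For the complexity, the number of chains after the first phase is at most $l$ (each chain holds at least one agent), so the loop executes at most $l-1$ times, each iteration costing $O(\log n)+O(n)=O(n)$ rounds; this second phase is therefore $O(ln)$. Adding the $O(ln)$ cost of the Theorem~\ref{Thm:Phi0} phase gives a total of $O(ln)$ rounds.

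The main obstacle I expect is not any single step but the careful verification that the loop invariant is correctly re-established at every turn—in particular that after a round of merges the system genuinely returns to the $\Phi=0$, state-\In~hypothesis required to re-enter Theorem~\ref{thm: all visibly directed}, rather than leaving behind some stray $0$-chain or visibly undirected $1$-chain. Equally delicate is confirming that the adversary cannot indefinitely stall convergence: the $O(n)$ bounds embedded in Theorems~\ref{thm: all visibly directed} and~\ref{thm: chain decrease} already encode that any attempt by the adversary to freeze a chain via the missing edge either forces an asymmetric configuration (granting chirality immediately) or merely slows a merge by a bounded factor, so the monovariant still decreases within the stated time.
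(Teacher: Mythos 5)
Your proposal is correct and follows essentially the same route as the paper's own proof: invoke Theorem~\ref{Thm:Phi0} to reach a state-\In, $\Phi=0$ configuration in $O(ln)$ rounds, then alternate Theorem~\ref{thm: all visibly directed} and Theorem~\ref{thm: chain decrease} with the number of chains as the strictly decreasing monovariant, terminating at a single visibly directed (hence global) chain, with the same $O(l)$-iterations-times-$O(n)$-per-iteration accounting. No substantive differences to report.
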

\begin{proof}
    By Theorem~\ref{Thm:Phi0}, we can conclude that for any arbitrary initial configuration $\C(0)$, there exists a round $t_1$ within $O(ln)$ rounds from the initial configuration, such that if chirality is not achieve till $t_1$ then in $\C(t_1)$ all agents are in state \In~ and $\Phi(t_1)=0$. Now, in $\C(t_1)$ if all chains are not visibly directed then by Theorem~\ref{thm: all visibly directed} we can conclude that within $O(B)=O(\log n)$ more rounds there exists a round $t_2$ such that if chirality is not achieved till $t_2$,  all chains become visibly directed chains with all agents being in state \In, in $\C(t_2)$. Now, let at $\C(t_2)$, there are $p$ chains. If $p=1$, then the direction of this chain is the agreed direction for all the agents. On the other hand, let $p>1$. For this case, by Theorem~\ref{thm: chain decrease}, within further $O(n)$ rounds, there exists a round $t_3$, such that if chirality is not achieved by $t_3$, in $\C(t_3)$, all agents are in state \In, all chains visibly directed and number of chains, say $p' <p$. If, $p'\ne 1$, and chirality is not achieved, then then we can recursively use Theorem~\ref{thm: chain decrease} until some time $t_4>t_3$ such that if until $t_4$ chirality is still not achieved then at $\C(t_4)$, all agents are in state \In~ there is exactly one visibly directed chain. In this case, agents can agree on chirality using the direction of the chain.
    Note that in the worst case, there can be $O(l)$ chains in $t_2$ and each $O(n)$ further rounds, the number of chains reduces by one. This way, we can conclude that $t_4-t_2\approx O(ln)$. So, if we take $t_f=t_4$, we can conclude that, from any initial configuration within $O(ln)$ rounds, the agents will agree on a chirality. 
\end{proof}

\subsection{Algorithm \mdseries{\textsc{Dispersed}}}
In this subsection, we describe the Algorithm \textsc{Dispersed}. After chirality is achieved by the agents and if the configuration is not dispersed, the agents first execute Algorithm \textsc{Dispersed} to achieve dispersion before achieving distance-$k$-dispersion. Note that,  achieving dispersion partially solves the open problem regarding dispersion from any arbitrary configuration for any 1-interval connected ring proposed in \cite{AAMKS2018ICDCN}. Let us have a high-level idea of the algorithm.

\subsubsection{Overview of \mdseries{\textsc{Dispered}}}
Now, we provide a brief overview of the subroutine \textsc{Dispersed()}, which is designed to transform the multiplicity nodes into singleton nodes to achieve dispersion. Let a configuration have a multiplicity node in some chain $C_{CW}(T, H)$, where $T$ and $H$ are the first node and last node, respectively, called $tail$ and $head$, of the chain in clockwise direction. First, we try to convert the multiplicity node, $M$ say, that is nearest to the tail of the chain.

\begin{algorithm}[H]
\caption{\textsc{Dispersed}($r$,$\mathcal{C}(t_c)$)}
\label{Dispersed}
$C_{CW}(T,H) = $ chain in which $r$ is located\;
 
\LinesNumbered
\If{$C_{CW}(T, H)$ contains a multiplicity}
    {
    $M=$ multiplicity node in $C_{CW}(T,H)$ nearest to $T$.\;
    $T'=$ Node adjacent to $T$ in $CCW$ direction\;
    $H'=$ Node adjacent to $H$ in $CW$ direction\;
    
      \uIf{$e_t \notin (T',H')_{CW} $ $\lor$ $e_t \in$ $(T',M)_{CW}$}
         {
          \uIf{$r$ is at $M$}
               {
                \If{$r$ is not least ID agent at $M$}{move in clockwise direction.}
               }
          \ElseIf{$r \in C_{CW}(M,H) \land r \notin M$}
               {move in clockwise direction.}
              
         }    
     \ElseIf{$e_t \in (M,H')_{CW}$}
         {
          \uIf{$r$ is at $M$}
               {
                \If{$r$ is not theleast ID agent at $M$}{move in counterclockwise direction.}
               }
          \ElseIf{$r \in C_{CW}(T,M) \land r \notin M$}
               {move in counterclockwise direction.}
         }
    
    }
\end{algorithm}

Let the chain be partitioned into two sub-chains, $C_{CW}(T, M)$ and $C_{CW}(M, H)$ in clockwise direction with respect to the multiplicity $M$. Then, in this scenario, three cases may occur according to the position of the missing edge $e_t$ (if it exists).
\begin{itemize}
    \item []\textbf{Case-I:} There is no missing edge in the arc $(T',H')_{CW}$, where $T'$ is the adjacent node of $T$ in counterclockwise direction and $H'$ is the the adjacent node of $H$ in clockwise direction.
    \item []\textbf{Case-II:} The missing edge is in the arc $(T',M)_{CW}$.  
    \item []\textbf{Case-III:} The missing edge is in the arc $(M,H')_{CW}$. 
\end{itemize}
For, first and second case, i.e., when there is there is no missing edge in arc $(T', H')_{CW}$ or, the missing edge is in the chain $(T', M)_{CW}$, then the agent with least ID on the multiplicity $M$ stay at its node and the other agents of the sub-chain $C_{CW}(M, H)$, except the least ID agent at $M$, move in clockwise direction. This way, $M$ becomes a singleton node.
Now, for the third case, i.e., when the missing edge is in the arc $(M, H')_{CW}$, the agent with least ID on the multiplicity $M$ stay at its node and the other agents of the sub-chain $C_{CW}(T, M)$, except the least ID agent at the multiplicity $M$, move in counterclockwise direction.
In each round during the execution of algorithm \textsc{Dispersed}, the number of agents on a multiplicity node decreases. Now, since there can be at most $O(l)$ multiplicity points, within $O(l)$ rounds, the configuration becomes a dispersed configuration.

\subsubsection{Correctness and Analysis of Algorithm \mdseries{\textsc{Dispersed}}}
\label{App C}
Let us first discuss some necessary notation for the correctness proof. We define the number of unoccupied nodes in a configuration $\C(t)$ with $\psi(t)$ for some time $t$. Note that if at some round $t$, the configuration $\C(t)$ is dispersed, then $\psi(t)=n-l$. Otherwise $\psi(t)>n-l$. 
To prove the correctness of the algorithm \textsc{Dispersed}, we prove the following lemma first.
\begin{lemma}
\label{lemma: hole node fills}
    Let $\C(t)$ be a configuration where $\psi(t)>n-l$. Then $\psi(t+1)<\psi(t)$ if agents execute \textsc{Disperse} in round $t$.
\end{lemma}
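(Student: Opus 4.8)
The plan is to show that a single round of the algorithm fills at least one previously empty node while emptying none, so that the count $\psi$ of unoccupied nodes strictly decreases. First I would extract the structural consequence of the hypothesis: since $\psi(t)>n-l$, the number of occupied nodes $n-\psi(t)$ is strictly smaller than $l$, so $l$ agents occupy fewer than $l$ nodes and some node must carry at least two agents. Hence at least one chain $C_{CW}(T,H)$ contains a multiplicity $M$, which is exactly the trigger for the non-trivial branch of \textsc{Dispersed}.

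Next I would examine the effect of one round on such a chain, following the three-case split of the algorithm according to the position of $e_t$. In Cases~I and~II the contiguous block of agents on $C_{CW}(M,H)$, minus the least-ID agent left behind at $M$, shifts one step clockwise; in Case~III the block on $C_{CW}(T,M)$ shifts one step counterclockwise. The key feasibility fact I would invoke is that the algorithm always moves the block \emph{away} from the missing edge: in Cases~I/II every edge of the arc $(M,H')_{CW}$ is present, and in Case~III every edge of $(T',M)_{CW}$ is present, so the front agent of the block can always cross into the node just outside the chain, namely $H'$ or $T'$. Since $H'$ and $T'$ are unoccupied by the very definition of a chain, that node becomes occupied, which already contributes a decrease of one to $\psi$.

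The delicate step, which I expect to be the main obstacle, is verifying the invariant that \emph{no occupied node is vacated} during the round, neither inside the operated chain nor through interference between chains that move simultaneously. For a single chain I would argue node by node: the rear node $M$ keeps its least-ID agent and stays occupied; every other node of the moving block receives at least one agent from its neighbour in the chain, because consecutive chain nodes are occupied and $M$, being a multiplicity, has at least one spare agent to pass to its successor; and nodes lying outside the block do not move. For the cross-chain issue I would use that distinct chains are separated by at least one empty node, so each chain's target ($H'$ or $T'$) is empty rather than occupied; the only coincidence that can arise is two adjacent chains pushing into the single empty node between them, which merely fills that node (possibly creating a new multiplicity) and still empties nothing. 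Chains containing no multiplicity induce no moves at all.

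Combining these observations, during round $t$ at least one empty node becomes occupied while no occupied node becomes empty, so $\psi(t+1)<\psi(t)$. I expect the bookkeeping across the three missing-edge cases together with the "no node is vacated" invariant to carry all the weight, whereas the existence of a multiplicity and the final strict-decrease conclusion follow immediately once that invariant is in place.
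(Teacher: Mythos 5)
Your proposal is correct and follows essentially the same route as the paper's proof: extract a chain with a multiplicity from $\psi(t)>n-l$, split on the position of the missing edge, observe that the shifted block always occupies the empty node $H'$ or $T'$ outside the chain, and check that no occupied node is vacated. Your treatment is in fact slightly more thorough than the paper's, which asserts the ``no node becomes unoccupied'' invariant without the node-by-node and cross-chain bookkeeping you supply.
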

\begin{proof}
    Since $\psi(t)>n-l$, there must exist a chain with multiplicity. Let $Ch_1=C_{CW}(T, H)$ be any such chain. Let $T'$ and $H'$ be the nodes adjacent to $T$ and $H$in the direction $CCW$ and $CW$ respectively. Note that $T'$ and $H'$ are both unoccupied nodes in $\C(t)$. Let $M$ be the multiplicity node in $Ch_1$ that is nearest to $T$. Then, there can be three cases depending on the position of the missing edge $e_t$ (if it exists). These three cases are,
    \begin{enumerate}
        \item $e_t \notin (T',H')_{CW}$
        \item $e_t \in (T',M)_{CW}$
        \item $e_t \in (M,H')_{CW}$
    \end{enumerate}
    For cases 1 and 2, all agents of $Ch_1$ in the arc $(M, H)_{CW}$ except the lowest ID agent at $M$ move clockwise according to algorithm \textsc{Dispersed}. Note that due to this $H'$ along with becomes occupied in $\C(t+1)$. Also, while executing algorithm \textsc{Dispersed}, no occupied nodes become unoccupied. So for these cases $\psi(t+1)<\psi(t)$. 
    Now for case 3, all agents in arc $(T, M)_{CW}$ move counterclockwise except the lowest ID agent at $M$ according to algorithm \textsc{Dispersed}. This ensures that $T'$ becomes occupied in $\C(t+1)$. Also, as no occupied nodes in $\C(t)$ become unoccupied in $\C(t+1)$, we can say $\psi(t+1) < \psi(t)$ in this case too.
\end{proof}
Now, using the above lemma, we prove the following theorem that justifies the correctness of algorithm \textsc{Dispersed}. 
\begin{theorem}
    \label{thm:dispersed}
   If at round $t$, the configuration $\C(t)$ is not dispersed but agents have chirality agreement, then a dispersed configuration $\C(t_f)$ will be reached within $O(l)$ rounds from $t$. 
\end{theorem}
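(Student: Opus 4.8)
The plan is to treat $\psi(t)$, the number of unoccupied nodes, as a monotone potential and show it forces dispersion within $O(l)$ rounds. First I would record the exact characterization already implicit in the text: a configuration is dispersed precisely when every node holds at most one agent, and since the $l$ agents occupy $n-\psi(t)$ distinct nodes this is equivalent to $\psi(t)=n-l$, whereas any non-dispersed configuration has $\psi(t)>n-l$. Hence the quantity $\psi(t)-(n-l)$ equals the total number of surplus agents $\sum_{v}\max\{f_t(v)-1,0\}$, a non-negative integer that vanishes exactly when $\C(t)$ is dispersed.

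Next I would invoke Lemma~\ref{lemma: hole node fills}: as long as the configuration is not dispersed (so $\psi(t)>n-l$) and the agents run \Disp, we have $\psi(t+1)<\psi(t)$. Since $\psi$ is integer-valued, every such round lowers $\psi$, and therefore lowers $\psi(t)-(n-l)$, by at least one. It then remains only to bound the initial value of this potential. Because a non-dispersed configuration contains at least one multiplicity, at least one node is occupied, so $\psi(t)\le n-1$ and consequently $\psi(t)-(n-l)\le l-1$. A non-negative integer that is at most $l-1$ and strictly decreases each round must reach $0$ within at most $l-1$ rounds; at that point $\psi=n-l$ and the configuration is dispersed. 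Taking $t_f\le t+(l-1)=t+O(l)$ closes the count.

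The step I expect to be the crux is extracting the $O(l)$ bound rather than the naive $O(n)$ bound. A direct reading of Lemma~\ref{lemma: hole node fills} only asserts that one hole is filled per round, and there may be up to $n-1$ holes, which would give $O(n)$. The essential observation is that a dispersed configuration still leaves $n-l$ nodes empty, so only the $\psi(t)-(n-l)\le l-1$ \emph{surplus} holes ever need to be filled; it is this gap, and not the raw hole count, that the potential measures. A secondary point to verify is that the hypotheses of Lemma~\ref{lemma: hole node fills} persist throughout the run: chirality, once established, is retained, and Algorithm~\ref{D-k-D_scattered} keeps routing every agent into \Disp as long as a multiplicity node survives, so the lemma applies at every intermediate round until dispersion is reached.
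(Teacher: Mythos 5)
Your proof is correct and follows essentially the same route as the paper: repeated application of Lemma~\ref{lemma: hole node fills} to drive the potential $\psi$ down to $n-l$. You actually make the final $O(l)$ count more explicit than the paper's terse closing sentence, by observing that only the $\psi(t)-(n-l)\le l-1$ surplus holes (equivalently, the surplus agents) need to be eliminated rather than all $n-1$ possible holes.
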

\begin{proof}
    Since the configuration $\C(t)$ is not dispersed, there must exist a multiplicity node. Thus $\psi(t)>n-l$. Also, the agents have chirality agreement. Thus, at $t$ agents execute \textsc{Dispersed}. Now by Lemma~\ref{lemma: hole node fills},in $\C(t+1)$, $\psi(t+1)$ is at least one less than $\psi(t)$ (i.e., $\psi(t+1) \le \psi(t)-1$). Now, if at $\C(t+1)$, $\psi(t+1)$ is still strictly greater than $n-l$, using the same argument, we can say there exists a round $t_f$ such that $\psi(t_f)=n-l$. So, $\C(t_f)$ becomes a dispersed configuration. Note that in $\C(t)$ there can be at most $n-1$ unoccupied nodes. So $t_f-t$ is asymptotically equal to $O(l)$.
\end{proof}


\subsection{Algorithm \mdseries{\textsc{Dispersed To $k$-Dispersed}}}
When all agents achieve dispersion and if $k>1$, then agents execute the algorithm \textsc{Dispersed To $k-$Dispersed} to achieve Distance-$k$- Dispersion. Following this, a brief overview of the algorithm is discussed.

\subsubsection{Overview of the Algorithm}
The agents execute the algorithm \textsc{Dispersed-To-$k-$Dispersed} until the configuration is a distance-$k-$dispersed configuration. If the configuration is not distance-$k$-dispersed, there must exist a $k$-Link with more than one agent. In each round, a selected set of agents moves either clockwise or counterclockwise so that the distance between at least one pair of agents placed in adjacent occupied nodes in the same $k$-Link increases and no pair of agents placed at adjacent occupied nodes in the same $k$-Link decreases. Further, it also ensures that agents placed at adjacent occupied nodes, which are already at a distance more than or equal to $k$, never decrease beyond $k$.

\begin{algorithm}
\caption{\textsc{Dispersed-To-$k$-Dispersed}($r$,$\mathcal{C}(t_c)$)}
\label{Algo:D-k-D}
\eIf{$\C(t_c)$ is not distance-$k$-dispersed}
    {
     \eIf{$e_{t_c}$ is not incident to $v(r')$ in the clockwise direction for some $r' \in \mathcal{EAS}(t_c)$}
     {
        \If{$r \in \mathcal{EAS}(t_c)$}
        {
            move clockwise\;
        }
     }
     {  $r'=$ the agent in $\mathcal{EAS}(t_c)$ such that $e_{t_c}$ is incident to $v(r')$in the clockwise direction\;
        $KS =$ the $k$-Link such that $r' \in \mathcal{NS}_{CW}(KS)$ \;
        $KS'=$ the first movable $k$-Link in the clockwise direction starting from $KS$.
        \If{$r \in \mathcal{NS}_{CCW}(KS')$}
        {
            move counter clockwise\;
        }
     }
    }
    {
        terminate\;
    }
\LinesNumbered
\end{algorithm}

The priority is given to clockwise movement first in any particular round, say $t$. For clockwise movement, the selected set is $\mathcal{EAS}(t)$. Agents in $\mathcal{EAS}(t)$ only move clockwise in round $t$, if the missing edge $e_t$ is not adjacent to $v(r)$  for any $r\in \mathcal{EAS}(t)$ in the clockwise direction. Otherwise, let $r' \in \mathcal{EAS}(t)$ be the agent such that the incident edge of $v(r')$ in the clockwise direction is the missing edge. Let $r'$ be the part of the $k$-Link $KS$ and $KS'$ be the first movable $k$-Link in the clockwise direction starting from $KS$. Then the set of agents in the set of $\mathcal{NS}_{CCW}(KS')$ move counterclockwise. We now provide a brief overview of the correctness of the algorithm, ensuring that within a bounded time, any dispersed configuration becomes distance-$k$-dispersed.

\subsubsection{Correctness and Analysis of  \mdseries{\textsc{Dispersed-To-$k-$Dispersed}}}
\label{App D}
\begin{lemma}
    \label{lemma: movable link exists}
    If a configuration $\C(t)$ is not a distance-$k$-dispersed configuration for some round $t$, then there exists a movable $k$-Link in $\C(t)$.
\end{lemma}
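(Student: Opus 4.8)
The plan is to prove the contrapositive: I will show that if $\C(t)$ admits \emph{no} movable $k$-Link, then $\C(t)$ must already be distance-$k$-dispersed. Throughout I would use the standing hypothesis $l \le \lfloor n/k\rfloor$, which gives the crucial inequality $lk \le n$, together with the fact that $k$-Links are defined only on dispersed configurations, so every occupied node is a singleton. First I would list the maximal $k$-Links $KS_1, \ldots, KS_m$ in clockwise order around the ring and, for each $i$, define the \emph{gap} $g_i = d_{CW}(H(KS_i), T(KS_{i+1}))$, indices taken cyclically. By the definition of a movable $k$-Link, the assumption ``no movable $k$-Link'' is exactly the statement that $g_i \le k$ for every $i$.

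The core is a single global counting identity. Traversing the ring once clockwise decomposes its $n$ edges into the internal arcs of the links and the gaps between them, so $n = \sum_{i=1}^m L_i + \sum_{i=1}^m g_i$, where $L_i$ is the length of the internal arc of $KS_i$. If $KS_i$ holds $n_i$ agents, then $L_i$ is a sum of $n_i-1$ consecutive adjacent-node distances, each at least $1$ and at most $k-1$, so $L_i \le (n_i-1)(k-1)$. Substituting $g_i \le k$ and $\sum_i n_i = l$ yields
\[
n \le \sum_{i=1}^m (n_i-1)(k-1) + \sum_{i=1}^m k = (k-1)(l-m) + mk = (k-1)l + m.
\]
Combining with $lk \le n$ gives $lk \le (k-1)l + m$, hence $l \le m$; since trivially $m \le l$, I conclude $m = l$, and therefore every $n_i = 1$, i.e.\ all $k$-Links are singletons.

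Once all links are singletons each internal arc has length $L_i = 0$, so the identity collapses to $n = \sum_{i=1}^{l} g_i$. With each $g_i \le k$ this forces $n \le lk$, which together with $lk \le n$ makes every inequality tight: $g_i = k$ for all $i$ and $lk = n$. Thus every occupied node is a singleton, every pair of adjacent occupied nodes is at distance exactly $k \ge k$, and a pair at distance exactly $k$ exists --- precisely the three conditions of Definition~\ref{def: Problem Definition}. Hence $\C(t)$ is distance-$k$-dispersed, contradicting the hypothesis, and the lemma follows. A small point worth checking separately is the degenerate case $m = 1$ (a single link wrapping around the whole ring), where $KS'$ coincides with $KS$ and $g_1$ is the wrap-around gap; the same inequalities apply and, using $l \ge 3$, immediately rule out $g_1 \le k$.

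The main obstacle I anticipate is not the algebra but pinning down the arc-length bookkeeping rigorously: I must ensure that the $L_i$ and $g_i$ partition the ring exactly once, with no double counting at the boundary nodes $H(KS_i)$ and $T(KS_{i+1})$, and that the per-edge bound $1 \le d_{CW} \le k-1$ inside a link is applied with the correct number of summands $n_i - 1$. The missing edge $e_t$ plays no role here, since both the $k$-Link structure and all distances are measured on the underlying ring $\R$ rather than on the current (possibly disconnected) snapshot; I would state this explicitly to keep the counting clean.
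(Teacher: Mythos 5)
Your proof is correct and follows essentially the same route as the paper's: both decompose the $n$ edges of the ring into the internal arcs of the maximal $k$-Links plus the gaps between consecutive links, bound each internal arc by (number of agents $-1$) times something less than $k$ and each gap by $k$, and derive a contradiction with $lk \le n$. Your contrapositive packaging, with the intermediate conclusion $m = l$ and the tightness analysis $g_i = k$, is a slightly more refined bookkeeping of the same counting argument the paper carries out directly as $n < lk$.
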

\begin{proof}
    Let $KS_1, KS_2, \cdots, KS_{\alpha}$ be all the $k$-Links in $\C(t)$ and $|KS_i|=l_i$ for all $i \in \{1,2,\cdots \alpha\}$$=I$. Let  $I'\subseteq I$ such that $|KS_i|\ge 2, \forall i\in I'$ and $|KS_j|=1, \forall j\in I\setminus I'$. Let $Len(KS_i)= d_{CW}(T(KS_i), H(KS_i))$ for all $i \in I$. Then $Len(KS_i)<(l_i-1)k$ where $i\in I'$ and otherwise $Len(KS_j)=0$ (i.e., when $j\in I\setminus I'$).

    If possible, let all $KS_i$ be non-movable $k$-Links. Then the exact next $k$ nodes in the clockwise direction from the head of any $KS_i$ are unoccupied. Let, for the $k$-Link $KS_i$, the arc with the next $k$ unoccupied nodes is denoted as $U_i$. We denote $Len(U_i)=k$ for all $i \in I$. So for every $i \in I'$, $Len(KS_i)+Len(U_i)< l_ik$ and for every $j \in I\setminus I'$, $Len(KS_i)+Len(U_i)=k$.

    Now, $n=\sum_{i=1}^{\alpha} Len(KS_i)+Len(U_i)=\sum_{i\in I'}[Len(KS_i)+Len(U_i)]+\sum_{i \notin I'}[Len(KS_i)+Len(U_i)]<\sum_{i\in I'}(l_ik)+xk$ (as $I'$ is not an empty set and assuming $I\setminus I'$ has $x$ elements without loss of generality). Now, from the above, we can say $n<(l-x)k+xk=lk$. But this is a contradiction, as according to the assumption $n\ge lk$. Hence, among the $\alpha$ $k$-Links, at least one movable $k$-Link exists.
\end{proof}
\begin{lemma}
    \label{lemma: larger gaps cannot be lower than $k$} Let $r$ be an agent such that $r'$ is the next agent of $r$ in clockwise direction in $\C(t)$ for some round $t$. Let $d_{CW}(r,r')>k$ in $\C(t)$, then $\forall t_1>t$, $d_{CW}(r,r')\ge k$ in $\C(t_1)$.
\end{lemma}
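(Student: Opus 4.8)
The plan is to track the two specific agents $r$ and $r'$ over time and show that the clockwise arc length $d_{CW}(r,r')$ between them never drops below $k$. First I would record the elementary fact that $d_{CW}(r,r')$ depends only on the current positions of $r$ and $r'$, and that within a single round every agent that moves does so in the \emph{same} rotational sense: all clockwise in the first branch of \textsc{Dispersed-To-$k$-Dispersed}, and all counterclockwise in the second (missing-edge) branch. Consequently $d_{CW}(r,r')$ can change by at most one per round, and it can \emph{decrease} only when (i) in the clockwise branch $r$ moves clockwise while $r'$ stays, or (ii) in the counterclockwise branch $r'$ moves counterclockwise while $r$ stays. Since the hypothesis gives $d_{CW}(r,r')>k$, i.e.\ $\ge k+1$, a single decrease still leaves it $\ge k$; the whole statement then follows by induction once I prove the key claim: \textbf{whenever $d_{CW}(r,r')=k$, the distance does not decrease in that round.}

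For the key claim I would first extract the structural consequences of $d_{CW}(r,r')=k$ together with clockwise-adjacency of $r,r'$. Because consecutive agents inside one $k$-Link are at clockwise distance strictly less than $k$, a gap of exactly $k$ forces $r$ and $r'$ to lie in two different maximal $k$-Links: $r=H(KS)$ and $r'=T(KS')$, where $KS'$ is the $k$-Link immediately clockwise of $KS$. Moreover $d_{CW}(H(KS),T(KS'))=k$ means, by the definition of movability, that $KS$ is \emph{not} movable. This non-movability is exactly what rules out the forbidden moves.

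Next I would dispose of the two branches. \emph{Clockwise branch:} here $r$ moves only if $r\in\mathcal{EAS}(t)$, i.e.\ $r\in\mathcal{NS}_{CW}(KS_j)$ for some movable $KS_j$. If $|KS|\ge 2$, then $r=H(KS)$ can belong to a clockwise nominee set only as the head of $KS$ itself, which is excluded since $KS$ is non-movable, so $r$ stays and the gap cannot shrink. If $|KS|=1$, then $r$ can move only as a member of the ``prefix'' of heads $\{H(KS_0),\dots,H(KS_m)\}$ of a run of singleton $k$-Links emanating counterclockwise from a movable base $KS_0$; writing $KS=KS_p$ with $p\ge 1$ (the case $p=0$ is impossible, as it would identify the non-movable $KS$ with the movable base, whose forward gap exceeds $k$), the clockwise-adjacent companion $r'=T(KS_{p-1})=H(KS_{p-1})$ is the strictly more clockwise element of the same run, hence also lies in the prefix and moves clockwise by one. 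Both endpoints shift together, so $d_{CW}(r,r')$ is preserved. The only gap actually contracted under the run's motion is the forward gap of the movable base $KS_0$, and that gap is $>k$ by movability; every interior gap is preserved and the rear gap grows. Hence no gap of value $k$ decreases.

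\emph{Counterclockwise (missing-edge) branch:} by symmetry I would show that $r'$ moves counterclockwise toward $r$ only as part of $\mathcal{NS}_{CCW}(KS')=A\setminus\mathcal{NS}_{CW}(KS')$, and trace the boundary of $\mathcal{NS}_{CW}(KS')$ to verify that the only clockwise gap contracted by the ``all-but-the-retained-head move counterclockwise'' motion is the forward gap of the (movable) $KS'$, which exceeds $k$; all remaining gaps are either preserved (both endpoints move) or grow. I would also check that the single missing edge, which by construction lies clockwise of the retained head, is not crossed by any selected counterclockwise move, so the adversary cannot manufacture an extra contraction. Combining both branches gives the claim, and the induction completes the proof. \textbf{The main obstacle} is precisely this boundary analysis: verifying, through the rather intricate singleton-run definition of $\mathcal{NS}_{CW}$ and its complement, the coupling ``if the clockwise endpoint of an exact-$k$ gap moves, its counterclockwise partner moves the same way,'' together with the fact that the unique contracting gap always belongs to a movable $k$-Link and is therefore strictly larger than $k$.
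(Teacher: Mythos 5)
Your proof is correct and follows essentially the same route as the paper's: reduce to a round in which an exact-$k$ gap would contract, observe via the nominee-set definitions that the non-moving endpoint must then be the head of a movable $k$-Link, and conclude from movability that the gap already exceeds $k$, in both the clockwise and the missing-edge (counterclockwise) branches. Your explicit boundary analysis of the singleton-run prefix of $\mathcal{NS}_{CW}$ in fact justifies a step the paper asserts without argument (its claims that $r=H(KS)$, resp.\ $r=H(KS_1')$), but the underlying argument is the same.
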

\begin{proof}
    Let $\exists~ t'>t$ such that in $\C(t')$, $d_{cw}(r,r')<k$. This implies there exists $t''$ such that $t>t''>t'$ and in $\C(t'')$, $d_{CW}(r,r')$ decreases from $k$. This is possible only if, exactly one of $r$ or $r'$ stays still while the other moves in round $t''$. If $r$ moves in $\C(t'')$. Then, it must have moved clockwise to reduce the distance from $k$ further. This implies $r\in \mathcal{EAS}(t'') \implies r \in \mathcal{NS}_{CW}(KS)$, for some movable $k$-Link $KS$. This implies $r= H(KS)$. Thus $d(r,r')>k$  in $\C(t'')$(by definition of a movable $k$-Link) which leads to contradiction. 
    
    So, now let $r'$ move and $r$ stays still in $t''$. Then, in $\C(t'')$, $r'$ must have moved counterclockwise.  This implies $\exists~ k$-Link $KS_1$ such that $e_{t''}$ is incident  to $v(r_1)$ in the clockwise direction where $r_1=H(KS_1)$. Let $KS_1'$ be the first movable $k$-Link in the clockwise direction starting from $KS_1$. This implies, $r'\in \mathcal{NS}_{CCW}(KS_1') $ and $r\notin \mathcal{NS}_{CCW}(KS_1')$. In fact $r=H(KS_1')$ in $\C(t'')$. Thus $d_{CCW}(r,r')>k$ in $\C(t'')$ which is a contradiction to the assumption that $d_{CCW}(r,r')=k$ in $\C(t'')$. Thus, for all $t_1\>t$, $d_{CCW}(r,r')\ge k$ in $\C(t_1)$. 
\end{proof}

\begin{lemma}
\label{lemma: smaller gap never decrease}
    Let for some round $t$, $d_{CW}(r,r')=k'<k$ where $v(r)$ and $v(r')$ be two adjacent occupied nodes in $\C(t)$. Then $\forall t'>t$, $d_{CW}(r,r')\ge k'$ in $\C(t')$.
\end{lemma}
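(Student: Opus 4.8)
The plan is to argue by contradiction, in close parallel to the proof of Lemma~\ref{lemma: larger gaps cannot be lower than $k$}. Suppose some $t'>t$ has $d_{CW}(r,r')<k'$ in $\C(t')$, and let $t''$ with $t<t''\le t'$ be the first round in which $d_{CW}(r,r')$ strictly drops below $k'$. The first structural fact I would establish is that a gap $d_{CW}(r,r')=k'<k$ forces $r$ and $r'$ to be consecutive agents of one and the same maximal $k$-Link, say $L$: by the definition of a $k$-Link, adjacent occupied nodes at $CW$-distance strictly less than $k$ are joined into a single link, so $r'$ is the in-link clockwise successor of $r$ and, crucially, $r$ is not the head $H(L)$. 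A second useful observation is that, since Algorithm~\ref{Algo:D-k-D} moves all active agents in one common direction in any fixed round (either $\mathcal{EAS}(t'')$ clockwise, or $\mathcal{NS}_{CCW}(KS')$ counterclockwise), it suffices to rule out a decrease in each of these two round-types separately; in particular no single round can move $r$ and $r'$ \emph{toward} each other.

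First I would handle a clockwise round. Every agent in $\mathcal{EAS}$ is a head of some $k$-Link, because each $\mathcal{NS}_{CW}(KS)$ is assembled solely from heads $H(KS_j)$; since $r$ is not a head of $L$, the agent $r$ does not move. The distance $d_{CW}(r,r')$ could therefore only shrink if $r'$ moved counterclockwise, which is impossible in a clockwise round. Hence in such a round $r'$ either stays (gap unchanged) or, if it happens to be $H(L)$, moves clockwise and \emph{increases} the gap. Either way $d_{CW}(r,r')$ does not decrease, contradicting the drop at $t''$.

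The genuinely delicate case, which I expect to be the main obstacle, is a counterclockwise round: here the missing edge $e_{t''-1}$ blocks the clockwise move of some elected agent, so the entire set $\mathcal{NS}_{CCW}(KS')=A\setminus\mathcal{NS}_{CW}(KS')$ moves counterclockwise, and this set contains essentially every agent, including potentially both $r$ and $r'$. I would argue: (i) if both $r$ and $r'$ take their counterclockwise step, the arc between them is merely translated and $d_{CW}(r,r')$ is unchanged; (ii) the only in-link agent excluded from the move is $H(KS')$, and since $r$ is not a head of its own link, $r\ne H(KS')$, so the only way $r$ could stay put is if the missing edge obstructed its step. The core of the argument is thus to show that $e_{t''-1}$ cannot obstruct a counterclockwise move inside $L$. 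For this I would use the algorithm's guard: $e_{t''-1}$ is incident, in the clockwise direction, to an elected (hence head) agent, so it lies in an inter-link gap of length $>k$ for a movable link, or $\ge k$ within a singleton run; in particular it can never lie strictly between two in-link neighbours whose $CW$-distance is $<k$. Consequently $e_{t''-1}$ cannot sit inside the arc of the small gap $L$, neither $r$ nor $r'$ is blocked, both execute their step, and by (i) the gap is preserved. (The symmetric possibility that $r'=H(KS')$ stays while $r$ moves only enlarges the gap, and the possibility $r=H(KS')$ is already excluded, since then $r'$ would be the tail of the next $k$-Link and $d_{CW}(r,r')>k$, contradicting $k'<k$.)

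Combining the two round-types, in round $t''$ the quantity $d_{CW}(r,r')$ cannot strictly decrease below $k'$, contradicting the choice of $t''$; any inter-link gaps affected incidentally stay $\ge k\ge k'$ by Lemma~\ref{lemma: larger gaps cannot be lower than $k$}. Hence no such $t'$ exists, and $d_{CW}(r,r')\ge k'$ holds for every $t'>t$.
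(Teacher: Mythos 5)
Your proof is correct and follows essentially the same route as the paper's: both argue by contradiction that a shrinking of the gap would force $r$ to be the head of a maximal $k$-Link (either because $r$ moved clockwise as an elected agent, or because $r$ was among the heads excluded from a counterclockwise move), which is impossible since $r$ has an in-link clockwise successor at distance $k'<k$. Your extra verification that the missing edge cannot obstruct a counterclockwise step strictly inside the link is a detail the paper's proof passes over silently, but it only tightens, and does not alter, the underlying argument.
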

\begin{proof}
    If possible let $\exists~ t_1>t$, such that at $t_1$, $d_{CW}(r,r')$ decreases from $k'$. This implies exactly one of $r$ or, $r'$ moves in round $t_1$. First, let $r$ move while $r'$ stays still. Then $r$ must have to move clockwise to decrease the distance. Which implies $r\in \mathcal{EAS}(t_1)$ and hence $r=H(KS)$ for some movable $k$-Link $KS$. Thus in $\C(t_1)$, $k'=d_{CW}(r,r')>k$ which is a contradiction as $k'<k$. 

    So, now let us consider that $r$ stays still whereas $r'$ moves counterclockwise in $\C(t_1)$. using similar argument as in second part of Lemma~\ref{lemma: larger gaps cannot be lower than $k$} we can say that there is a movable $k$-Link $KS'$, such that $r=H(KS')$ in $\C(t_1)$, and thus $k'=d_{CW}(r,r')>k$ in $\C(t_1)$ which is again a contradiction. Thus for any $t'>t$, $d_{CW}(r,r')\ge k'$ in $\C(t')$. 
\end{proof}
\begin{lemma}
    \label{lemma: distance increase}
    Let $\mathcal{C}(t)$ be a configuration that is not distance-$k$-dispersed. Then there exists a pair of agents $r$ and $r'$ located on adjacent occupied nodes $v(r)$ and $v(r')$ such that:
\begin{enumerate}
    \item $r$ and $r'$ belong to the same $k$-Link in $\mathcal{C}(t)$, and
    \item  $d_{CW}(r, r')$ in $\mathcal{C}(t+1)>$ $d_{CW}(r, r')$ in $\mathcal{C}(t)$.
\end{enumerate}
\end{lemma}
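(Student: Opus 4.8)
The plan is to show that in every round the routine \textsc{Dispersed-To-$k$-Dispersed} forces the ``last gap'' of some $k$-Link containing at least two agents to grow by one, while the head of that link (or the head's predecessor) is pinned in place. Throughout I use that this routine runs only for $k\ge 2$ on dispersed configurations, so every occupied node is a singleton; since $\C(t)$ is not distance-$k$-dispersed, some pair of adjacent occupied nodes lies at clockwise distance strictly less than $k$, hence at least one $k$-Link $K$ with $|K|\ge 2$ exists. By Lemma~\ref{lemma: movable link exists} a movable $k$-Link also exists, so $\mathcal{EAS}(t)\ne\varnothing$.

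First I would record a structural fact used in both branches: every agent appearing in any clockwise nominee set $\mathcal{NS}_{CW}(\cdot)$ is the head of some $k$-Link, so $\mathcal{EAS}(t)$ is a set of heads. The key claim is then that $\mathcal{EAS}(t)$ contains the head $H(K)$ of a $k$-Link $K$ with $|K|\ge 2$. If some movable $k$-Link has two or more agents, its head is such a vertex. Otherwise every movable $k$-Link is a singleton; starting from one such singleton and scanning counterclockwise to the first $k$-Link $KS_x$ with $|KS_x|\ge 2$, I pass to the \emph{last} movable singleton $L$ occurring counterclockwise before $KS_x$. By the choice of $L$ no movable link lies strictly between $L$ and $KS_x$, and $KS_x$ itself is non-movable; so the first clause of the $|KS|=1$ definition applies to $L$ and yields $H(KS_x)\in\mathcal{NS}_{CW}(L)\subseteq\mathcal{EAS}(t)$, proving the claim.

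With $K$ and $H(K)$ in hand I would split on the branch taken. In the clockwise branch the guard guarantees that $e_t$ is not clockwise-incident to any agent of $\mathcal{EAS}(t)$, in particular not to $H(K)$, so $H(K)$ moves clockwise. Its immediate predecessor $r'$ inside $K$ is not a head, hence $r'\notin\mathcal{EAS}(t)$ and stays put, so $d_{CW}(r',H(K))$ increases by one; the pair $(r',H(K))$ lies in the same $k$-Link $K$ in $\C(t)$, as required. In the counterclockwise branch the movers are $\mathcal{NS}_{CCW}(KS')$, and reading the definitions gives $\mathcal{NS}_{CCW}(KS')=A\setminus X$ with $X$ a set of heads that, by the same scanning argument (now applied through the singleton clauses of $\mathcal{NS}_{CCW}$), contains the head of a $k$-Link $\widehat K$ with $|\widehat K|\ge 2$. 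Thus $H(\widehat K)\in X$ stays while its predecessor in $\widehat K$, being a non-head and therefore in $\mathcal{NS}_{CCW}(KS')$, moves counterclockwise, increasing the corresponding clockwise gap.

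The one point needing care, and which I expect to be the crux, is ruling out that the adversarial missing edge sabotages the counterclockwise branch. The decisive observation is that Case~B is entered precisely because $e_t$ is clockwise-incident to some $r'\in\mathcal{EAS}(t)$, and $r'$ is a head; since $k\ge 2$, the node immediately clockwise of a head is unoccupied, so $e_t$ joins $v(r')$ to an empty node. A counterclockwise step is obstructed by $e_t$ only for an agent sitting on that empty node, which is impossible, so $e_t$ blocks no counterclockwise move at all; in particular the predecessor of $H(\widehat K)$ advances freely. (In the clockwise branch the analogous issue does not arise, since the guard of that branch already excludes $e_t$ from blocking $H(K)$.) Combining the two branches produces the desired pair and completes the argument.
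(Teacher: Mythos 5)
Your proof is correct and follows essentially the same route as the paper's: locate the head of the first multi-agent $k$-Link reached by the counterclockwise nominee scan, pair it with its predecessor (a non-head, hence a non-mover in the clockwise case and a mover in the counterclockwise case), and split on which branch of \textsc{Dispersed-To-$k$-Dispersed} executes. You are in fact slightly more careful than the paper in two places: you pass to the \emph{last} movable singleton before $KS_x$ so that the first clause of the $|KS|=1$ definition genuinely applies (the paper's direct claim $H(KS_x)\in\mathcal{NS}_{CW}(KS_0)$ is only justified under that clause), and you explicitly verify that the missing edge, being clockwise-incident to a head whose clockwise neighbour is unoccupied when $k\ge 2$, cannot obstruct any counterclockwise move.
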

\begin{proof}
    Since the configuration $\C(t)$ is not distance-$k$-dispersed, there exists a $k$-Link such that the number of agents on that $k$-Link is at least two. Also, by Lemma~\ref{lemma: movable link exists}, in $\C(t)$ there is at least one movable $k$-Link. Let $KS_0$ be one such movable $k$-Link. Let starting from $KS_0$, all the $k$-Links in counter clockwise direction be $KS_1, KS_2,\cdots, KS_{\alpha}$. Let $0\le x\le \alpha$ be such that $KS_x$ is the first $k$-Link  with number of agents at least two in counterclockwise direction starting from $KS_0$ (including it too). Let $r_x=H(KS_x)$ and $r_x'$ be the next agent of $r_x$ in the same $k$-Link in counterclockwise direction. Note that by definition $r_x \in \mathcal{NS}_{CW}(KS_0) \subseteq \mathcal{EAS}(t)$ but $r_x'$ is not. 
    
    Now let the missing edge $e_t$ be not incident to $v(r_y)$ in the clockwise direction for some $r_y \in \mathcal{EAS}(t)$. In this case, $r_x$ moves clockwise and $r_x'$ does not move in $\C(t)$. Thus, taking $r=r_x'$ and $r'=r_x$ is sufficient for this case. 

    Now let $e_t$ be incident to $v(r_y)$ in the clockwise direction for some $r_y \in \mathcal{EAS}(t)$. Let $r_y$ is part of the $k$-Link $KS_z$ for some $0 \le z \le \alpha$. Let $KS_z'$ be the first movable $k$-Link in the clockwise direction starting from $KS_z$. Let us consider the set $M=A \setminus \mathcal{NS}_{CCW}(KS_z')$. Now, let $r_z$ be the last agent in $M$ starting from $H(KS_z')$ in the counterclockwise direction. Then, $r_z=H(KS_{\beta})$ for some $k$-Link, with at least two agents. Now let $r_z'$ be the next agent in the counterclockwise direction of $r_z$. Now $r_z$ and $r_z'$ are in same $k$-Link by definition of $\mathcal{NS}_{CCW}(KS_z')$ and also in $\C(t)$ $r_z'$ moves counter clockwise and $r_z$ stays still. Thus $d_{CW}(r_z',r_z)$ increases in $\C(t+1)$ from $\C(t)$. So, taking $r=r_z'$ and $r'=r_z$ is sufficient for the proof. Note that for each case there exists a pair $r$ and $r'$ in the same $k$-Link such that $v(r')$ is the adjacent occupied node of $v(r)$ in clockwise direction and $d_{CW}(r,r')$ increases in $\C(t+1)$
\end{proof}
Note that in Lemma~\ref{lemma: larger gaps cannot be lower than $k$} we proved that any pair of agents on adjacent occupied nodes separated by a distance greater than $k$ can never be in the same $k$-Link further. So, the number of $k$-Links never decreases. Also in Lemma~\ref{lemma: smaller gap never decrease}, we proved that for any two agents at adjacent occupied nodes in the same $k$-Link, their smallest distance never decrease. In the worst case, let all agents be in the same $k$-Link in the initial configuration. So, to achieve Distance-$k$-Dispersion, the smallest distance between the agents for each of $O(l)$ pairs of agents in adjacent occupied nodes, has to be increased at least $k$ times. For $O(l)$ pairs, the total increase in distance should be $O(lk)$ in the worst case. Now, by Lemma~\ref{lemma: distance increase}, in each round, there exists a pair of agents in adjacent occupied nodes with distance strictly less than $k$, such that their distance increases. So, in $O(lk)$ rounds of executing \textsc{Dispersed-To-$k-$Dispersed}, the Distance-$k$-Dispersion will be solved. From this discussion, we have the following theorem.   
\begin{theorem}
   Let $\mathcal{C}(t)$ be any dispersed configuration in which all $l$ agents possess chirality at some round $t$. Then, by executing the algorithm \textsc{Dispersed-To-$k$-Dispersed}, the agents can achieve Distance-$k$-Dispersion within $O(lk)$ rounds.
\end{theorem}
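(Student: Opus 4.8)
The plan is to run a single monotone potential argument, using the four preceding lemmas as the ``guaranteed progress'' and ``no backsliding'' ingredients. For a configuration $\C(t)$, I would define the total separation deficit
\[
D(t) = \sum \max\{0,\; k - d_{CW}(r,r')\},
\]
where the sum ranges over all ordered pairs $(r,r')$ of agents occupying consecutive occupied nodes (i.e.\ $v(r')$ is the adjacent occupied node of $v(r)$ in the $CW$ direction). Since the starting configuration is dispersed, there are exactly $l$ such consecutive gaps, each of length at least $1$, so every summand is at most $k-1$ and hence $D(t)\le l(k-1)=O(lk)$. The separation requirement, condition (ii) of Definition~\ref{def: Problem Definition}, holds exactly when $D(t)=0$. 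The whole proof then reduces to showing that $D$ never increases and strictly decreases on every round on which the target is not yet met.

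First I would record that condition (i) is preserved throughout. An agent moves only when it is the head (resp.\ a $\mathcal{NS}_{CCW}$ member) of a \emph{movable} $k$-Link, and the movability predicate guarantees that the target gap it advances into exceeds $k\ge 1$; hence no two agents ever become co-located and every intermediate configuration stays dispersed. For the monotonicity of $D$, a summand can grow only if some consecutive gap shrinks. Lemma~\ref{lemma: smaller gap never decrease} forbids this for pairs lying inside a common $k$-Link (the gaps currently below $k$), and Lemma~\ref{lemma: larger gaps cannot be lower than $k$} forbids it for the pairs already separated by distance greater than $k$ (exactly the $k$-Link boundaries). The only case left uncovered is a gap equal to $k$; I would dispose of it with the same contradiction used in those two lemmas, namely that an agent bordering such a gap is never selected into $\mathcal{EAS}$ or into the relevant $\mathcal{NS}_{CCW}$ set, so it cannot make a move that would pull the gap below $k$. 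Together these give $D(t+1)\le D(t)$ for every round.

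Next I would extract strict progress. While $\C(t)$ is not distance-$k$-dispersed, Lemma~\ref{lemma: movable link exists} supplies a movable $k$-Link, and Lemma~\ref{lemma: distance increase} then exhibits a pair $r,r'$ in a common $k$-Link whose clockwise distance strictly increases from $\C(t)$ to $\C(t+1)$. Since that gap was strictly below $k$, its deficit drops by at least one, so combined with the previous paragraph we get $D(t+1)\le D(t)-1$. As $D$ is a non-negative integer bounded initially by $O(lk)$ and strictly decreasing on every round before the target is reached, after at most $O(lk)$ rounds we have $D=0$; every consecutive gap is then at least $k$ (condition (ii)) while dispersion (condition (i)) has been maintained, so the guard ``$\C(t_c)$ is not distance-$k$-dispersed'' fails and all agents terminate. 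For condition (iii) I would invoke the packing identity $\sum(\text{gaps})=n$ together with $l\le\lfloor n/k\rfloor$: because the algorithm halts at the first round on which the separation predicate becomes true and performs no over-shooting moves, at least one gap has just reached the value $k$ from below, giving a pair at distance exactly $k$.

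The step I expect to be the main obstacle is the clean monotonicity of $D$ under the adversary's edge removals, specifically the boundary case of gaps equal to $k$ and the bookkeeping in the blocked-edge branch, where a forced $CCW$ move must be shown to involve only an agent that is the head/tail of the appropriate link so that no already-protected gap is disturbed. This is precisely where Lemmas~\ref{lemma: larger gaps cannot be lower than $k$} and~\ref{lemma: smaller gap never decrease} carry the weight; once their guarantees are in place, the remaining count is a one-line telescoping of $D$. A secondary subtlety, handled outside the potential argument via the identity $\sum(\text{gaps})=n$, is confirming that reaching condition (ii) forces condition (iii) for every admissible triple $(n,l,k)$.
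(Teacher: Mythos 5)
Your proposal is essentially the paper's own argument: the paper also combines Lemma~\ref{lemma: movable link exists} and Lemma~\ref{lemma: distance increase} for one unit of guaranteed progress per round with Lemmas~\ref{lemma: larger gaps cannot be lower than $k$} and~\ref{lemma: smaller gap never decrease} for no backsliding, and concludes via exactly the $O(lk)$ total-increase count that your potential $D(t)$ makes explicit. Your version is a cleaner formalization and additionally flags the gap-equal-to-$k$ boundary case and conditions (i) and (iii) of Definition~\ref{def: Problem Definition}, which the paper's proof leaves implicit.
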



\section{Conclusion}
In this work, we studied the Distance-$k$-Dispersion (D-$k$-D) problem for synchronous mobile agents in 1-interval connected rings without assuming chirality. We introduced a new technique that enables agents to simulate chirality using only local communication, bounded memory, vision, and global weak multiplicity detection and thereby showing that chirality is not a fundamental requirement for coordination in this model. With this insight, we solved D-$k$-D—and hence the standard dispersion problem—from arbitrary initial configurations, addressing an open question posed by Agarwalla et al.\ (ICDCN 2018) for even-sized rings. Finally, we provided the first finite-time algorithm that solves D-$k$-D in $O(ln)$ rounds, where $l$ is the number of agents and $n$ is the ring size.

An interesting direction for future work is to explore whether similar techniques for simulating chirality can be applied in other dynamic graph topologies or under more adversarial conditions, such as vertex permutation dynamism or weaker visibility models. Additionally, the overall time complexity of $O(ln)$ is solely due to the chirality simulation step; the subsequent algorithms for dispersion and distance-$k$-dispersion are time-optimal. This raises a natural question: What is the lower bound on the time required to simulate chirality in this model? It would be really interesting to find out the answer to this question.

\bibliography{refs}

\end{document}